\documentclass{article}[11pt]

\usepackage{amsmath,amsthm,amssymb}
\usepackage{fullpage}
\usepackage{mathtools,bbm,xspace}
\usepackage{mathrsfs}
\usepackage{bm}

\usepackage[utf8]{inputenc} 
\usepackage[T1]{fontenc}    
\usepackage{url}            
\usepackage{booktabs}       
\usepackage{amsfonts}       
\usepackage{nicefrac}       
\usepackage{microtype}      
\usepackage{algorithm}
\usepackage{algpseudocode}
\usepackage[dvipsnames]{xcolor}
\usepackage{subfig}
\usepackage{ifxetex}

\usepackage{verbatim}
\usepackage[
letterpaper,
top=1in,
bottom=1in,
left=1in,
right=1in]{geometry}

\ifxetex
\usepackage[bigdelims,vvarbb]{newpxmath}
\usepackage{mathspec}
\setallmainfonts(Digits,Latin){Noto Serif}
\else
\usepackage{mathpazo}
\fi

\usepackage[pagebackref]{hyperref}
\hypersetup{
pagebackref,
letterpaper=true,
colorlinks=true,
urlcolor=blue,
linkcolor=blue,
citecolor=OliveGreen,
}
\linespread{1.08}

\usepackage[capitalise,nameinlink]{cleveref}
\crefname{lemma}{Lemma}{Lemmas}
\crefname{fact}{Fact}{Facts}
\crefname{theorem}{Theorem}{Theorems}
\crefname{corollary}{Corollary}{Corollaries}
\crefname{claim}{Claim}{Claims}
\crefname{example}{Example}{Examples}
\crefname{problem}{Problem}{Problems}
\crefname{definition}{Definition}{Definitions}
\crefname{assumption}{Assumption}{Assumptions}
\crefname{subsection}{Subsection}{Subsections}
\crefname{section}{Section}{Sections}

\newtheorem{theorem}{Theorem}[section]

\newtheorem{lemma}[theorem]{Lemma}
\newtheorem{corollary}[theorem]{Corollary}

\theoremstyle{definition}
\newtheorem{definition}[theorem]{Definition}

\newtheorem{problem}[theorem]{Problem}

\newtheorem{exercise-easy}[theorem]{Exercise}
\newtheorem{exercise-med}[theorem]{Exercise}
\newtheorem{exercise-hard}[theorem]{Exercise$^\star$}

\newtheorem{remark}[theorem]{Remark}


\DeclarePairedDelimiterX{\norm}[1]{\lVert}{\rVert}{#1}

\newcommand{\twonorm}[1]{\norm{#1}_2}
\newcommand{\infnorm}[1]{\norm{#1}_\infty}

\DeclarePairedDelimiterX{\abs}[1]{\lvert}{\rvert}{#1}
\DeclarePairedDelimiterX{\inp}[2]{\langle}{\rangle}{#1, #2}

\DeclarePairedDelimiterX{\infdivx}[2]{(}{)}{%
  #1\;\delimsize\|\;#2%
}

\newcommand{\mc}[1]{\mathcal{#1}}
\newcommand{\mb}[1]{\mathbb{#1}}

\newcommand{\lprp}[1]{\left(#1\right)}
\newcommand{\lbrb}[1]{\left\{#1\right\}}
\newcommand{\lsrs}[1]{\left[#1\right]}

\newcommand{\eps}{\varepsilon}

\newcommand{\stable}{\text{Stab}}

\newcommand{\median}{\text{Median}}
\newcommand{\medp}{\text{Med}_p}

\DeclareMathOperator*{\vol}{Vol}

\def\authornotes{1pt}

\ifdim\authornotes=1pt
\newcommand{\ynote}[1]{\footnote{\color{ForestGreen}Yeshwanth: #1}}
\newcommand{\jnote}[1]{\footnote{\color{Orange}Jelani: #1}}
\else
\newcommand{\ynote}[1]{}
\newcommand{\jnote}[1]{}
\fi

\usepackage{tikz}
\thispagestyle{empty}

\begin{document}

\title{On Adaptive Distance Estimation}
\author{Yeshwanth Cherapanamjeri\thanks{UC Berkeley. \texttt{yeshwanth@berkeley.edu}. Supported by a Microsoft Research BAIR Commons Research Grant} \and Jelani Nelson\thanks{UC Berkeley. \texttt{minilek@berkeley.edu}. Supported by NSF award CCF-1951384, ONR grant N00014-18-1-2562, ONR DORECG award N00014-17-1-2127, and a Google Faculty Research Award.}}
\date{\today}
\maketitle

\begin{abstract}
  We provide a static data structure for distance estimation  which supports {\it adaptive} queries. Concretely, given a dataset $X = \{x_i\}_{i = 1}^n$ of $n$ points in $\mathbb{R}^d$ and $0 < p \leq 2$, we construct a randomized data structure with low memory consumption and query time which, when later given any query point $q \in \mathbb{R}^d$, outputs a $(1+\eps)$-approximation of $\norm{q - x_i}_p$ with high probability for all $i\in[n]$. The main novelty is our data structure's correctness guarantee holds even when the sequence of queries can be chosen adaptively: an adversary is allowed to choose the $j$th query point $q_j$ in a way that depends on the answers reported by the data structure for $q_1,\ldots,q_{j-1}$. Previous randomized Monte Carlo methods do not provide error guarantees in the setting of adaptively chosen queries \cite{jl,Indyk06,ThorupZ12,IndykW18}. Our memory consumption is $\tilde O((n+d)d/\eps^2)$, slightly more than the $O(nd)$ required to store $X$ in memory explicitly, but with the benefit that our time to answer queries is only $\tilde O(\eps^{-2}(n + d))$, much faster than the naive $\Theta(nd)$ time obtained from a linear scan in the case of $n$ and $d$ very large. Here $\tilde O$ hides $\log(nd/\eps)$ factors.  We discuss applications to nearest neighbor search and nonparametric estimation. 

  Our method is simple and likely to be applicable to other domains: we describe a generic approach for transforming randomized Monte Carlo data structures which do not support adaptive queries to ones that do, and show that for the problem at hand, it can be applied to standard nonadaptive solutions to $\ell_p$ norm estimation with negligible overhead in query time and a factor $d$ overhead in memory.
\end{abstract}

\section{Introduction}
\label{sec:intro}

In recent years, much research attention has been directed towards understanding the performance of machine learning algorithms in adaptive or adversarial environments. In diverse application domains ranging from malware and network intrusion detection \cite{evasionAttacks,ChandolaBK09} to strategic classification \cite{strategic} to autonomous navigation \cite{transfer,delvingTransAdv,pracblackbox}, the vulnerability of machine learning algorithms to malicious manipulation of input data has been well documented. Motivated by such considerations, we study the problem of designing efficient data structures for distance estimation, a basic primitive in algorithms for nonparametric estimation and exploratory data analysis, in the adaptive setting where the sequence of queries made to the data structure may be adversarially chosen. Concretely, the distance estimation problem is defined as follows:


\begin{problem}[Approximate Distance Estimation (ADE)]
  For a known norm $\norm{\cdot}$ we are given a set of vectors $X = \{x_i\}_{i =1}^n \subset  \mb{R}^d$ and an accuracy parameter $\eps\in(0,1)$, and we must produce some data structure $\mathcal D$. Then later, given only $\mathcal D$ stored in memory with no direct access to $X$, we must respond to queries that specify $q\in\mb R^d$ by reporting distance estimates $\tilde d_1, \ldots, \tilde d_n$ satisfying
  $$ \forall i\in[n],\ (1-\eps)\norm{q-x_i}\le \tilde d_i \le (1+\eps)\norm{q-x_i} .$$
\end{problem}

The quantities we wish to minimize in a solution to ADE are (1) pre-processing time (the time to compute $\mathcal D$ given $X$), (2) memory required to store $\mathcal D$ (referred to as ``space complexity''), and (3) query time (the time required to answer a single query). The trivial solution is to, of course, simply store $X$ in memory explicitly, in which case pre-processing time is zero, the required memory is $O(nd)$, and the query time is $O(nd)$ (assuming the norm can be computed in linear time, which is the case for the norms we focus on in this work).


Standard solutions to ADE are via {\it randomized linear sketching}: one picks a random ``sketching matrix'' $\Pi\in\mb{R}^{m\times d}$ for some $m\ll d$ and stores $y_i = \Pi x_i$ in memory for each $i$. Then to answer a query, $q$, we return some estimator applied to $\Pi(q - x_i) = \Pi q - y_i$. Specifically in the case of $\ell_2$, one can use the Johnson-Lindenstrauss lemma \cite{jl}, AMS sketch \cite{AlonMS99}, or \textsf{CountSketch} \cite{CharikarCF04,ThorupZ12}. For $\ell_p$ norms $0<p<2$, one can use Indyk's $p$-stable sketch \cite{Indyk06} or that of \cite{KaneNPW11}. Each of these works specifies some distribution over such $\Pi$, together with an estimation procedure. All these solutions have the advantage that $m = \tilde{O}(1/\eps^2)$, so that the space complexity of storing $y_1,\ldots,y_n$ would only be $\tilde{O}(n/\eps^2)$ instead of $O(nd)$. The runtimes for computing $\Pi x$ for a given $x$ range from $O(d / \eps^2)$ to $O(d)$ for $\ell_2$ and from $O(d/\eps^2)$ to $\tilde{O}(d)$ for $\ell_p$ (\cite{KaneNPW11}). However, estimating $\norm{q-x_i}_p$ from $\Pi q - y_i$ takes $\tilde O(n/\eps^2)$ time in all cases. Notably, the recent work of Indyk and Wagner \cite{IndykW18} is not based on linear sketching and attains the optimal space complexity in bits required to solve ADE in Euclidean space, up to an $O(\log(1/\eps))$ factor.

One downside of all the prior work mentioned in the previous paragraph is that they give Monte Carlo randomized guarantees that do {\it not} support adaptive queries, i.e.\ the ability to choose a query vector based on responses by the data structure given to previous queries. Specifically, all these data structures provide a guarantee of the form
$$
\forall q\in\mb{R}^d,\ \mb{P}_s(\text{data structure correctly responds to } q) \ge 1-\delta ,
$$
where $s$ is some random ``seed'', i.e.\ a random string, used to construct the data structure (for linear sketches specifically, $s$ specifies $\Pi$). The main point is that $q$ is not allowed to depend on $s$; $q$ is first fixed, then $s$ is drawn independently. Thus, in a setting in which we want to support a potentially adversarial sequence of queries $q_1,q_2,\ldots,$ where $q_j$ may depend on the data structure's responses to $q_1,\ldots,q_{j-1}$, the above methods do not provide any error guarantees since responses to previous queries are correlated with $s$. Thus, if $q_j$ is a function of those responses, it, in turn, is correlated with $s$. In fact, far from being a technical inconvenience, explicit attacks exploiting such correlations were constructed against all approaches based on linear sketching (\cite{hardtW13}), rendering them open to exploitation in the adversarial scenario. We present our results in the above context:


\paragraph{Our Main Contribution.} We provide a new data structure for ADE in the adaptive setting, for $\ell_p$ norms ($0<p\le 2$) with memory consumption $\tilde O((n + d)d/\eps^2)$, slightly more than the $O(nd)$ required to store $X$ in memory explicitly, but with the benefit that our query time is only $\tilde O(\eps^{-2}(n + d))$ as opposed to the $O(nd)$ query time of the trivial algorithm. The pre-processing time is $\tilde O(nd^2/\eps^2)$. Our solution is randomized and succeeds with probability $1 - 1/\mathop{poly}(n)$ for each query. Unlike the previous work discussed, the error guarantees hold even in the face of adaptive queries.

\bigskip

In the case of Euclidean space ($p=2$), we are able to provide sharper bounds with fewer logarithmic factors. Our formal theorem statements appear later as \cref{thm:mainThm,thm:mainThmEuc}. Consider for example the setting where $\eps$ is a small constant, like $0.1$ and $n > d$. Then, the query time of our algorithm is optimal up to logarithmic factors; indeed just reading the input then writing the output of the distance estimates takes time $\Omega(n+d)$. Secondly, a straightforward encoding argument implies that any such approach must have space complexity at least $\Omega(nd)$ bits (see Section~\ref{sec:lower_bound}) which means that our space complexity is nearly optimal as well. Finally, pre-processing time for the data structure can be improved by using fast algorithms for rectangular matrix multiplication (See \cref{sec:analysis} for further discussion).

\subsection{Related Work}
\label{ssec:relWork}

As previously discussed, there has been growing interest in understanding risks posed by the deployment of algorithms in potentially adversarial settings (\cite{evasionAttacks,strategic,adversarialExamples,advExamplesAttack,delvingTransAdv,transfer}). In addition, the problem of preserving statistical validity in exploratory data analysis has been well explored \cite{genAdapHoldout,algStabAdaptive,reusableHoldout,preservingAdaptive,surveyPrivateData} where the goal is to maintain coherence with an unknown distribution from which one obtains data samples. There has also been previous work studying linear sketches in adversarial scenarios quite different from those appearing here (\cite{mns,ghr12,ghs12}).

Specifically on data structures, it is, of course, the case that deterministic data structures provide correctness guarantees for adaptive queries automatically, though we are unaware of any non-trivial deterministic solutions for ADE. For the specific application of approximate nearest neighbor, the works of \cite{kl97,Kushilevitz} provide non-trivial data structures supporting adaptive queries; a comparison with our results is given in \cref{sec:applications}. In the context of streaming algorithms (i.e.\ sublinear memory), the very recent work of Ben-Eliezer et al.\ \cite{BenEliezerJWY20} considers streaming algorithms with both adaptive queries {\it and} updates. One key difference is they considered the insertion-only model of streaming, which does not allow one to model computing some function of the difference of two vectors (e.g.\ the norm of $q-x_i$).

\subsection{More on applications}
\label{sec:applications}

\paragraph{Nearest neighbor search:} Obtaining efficient algorithms for Nearest Neighbor Search (NNS) has been a topic of intense research effort over the last 20 years, motivated by diverse applications spanning computer vision, information retrieval and database search \cite{bingham,nnvision,lshpstable}. 
While fast algorithms for \emph{exact} NNS have impractical space complexities (\cite{cla88,mei93}), a line of work, starting with the foundational results of \cite{im98,Kushilevitz}, have resulted in query times sub-linear in $n$ for the approximate variant. Formally, the Approximate Nearest Neighbor Problem (ANN) is defined as follows:

\begin{problem}[Approximate Nearest Neighbor]
    \label{pro:ann}
    Given $X = \{x_i\}_{i \in [n]} \subset \mb{R}^d$, norm $\norm{\cdot}$, and approximation factor $c>1$, create a data structure $\mathcal D$ such that in the future, for any query point $q\in\mb{R}^d$, $\mathcal D$ will output some $i\in[n]$ satisfying
        $\norm{q - x_i} \leq c\cdot \min_{j \in [n]} \norm{q - x_j}$.
\end{problem}
The above definition requires the algorithm to return a point from the dataset whose distance to the query point is close to the distance of the exact nearest neighbor. 
The Locality Sensitive Hashing (LSH) approach of \cite{im98} gives a Monte Carlo randomized approach with low memory and query time, but it does not support adaptive queries. There has also been recent interest in obtaining Las Vegas versions of such algorithms \cite{ahle,wei,pag,sw}. Unfortunately, those works also do not support adaptive queries.  More specifically, these Las Vegas algorithms always answer (even adaptive) queries correctly, but their query times are random variables that are guaranteed to be small in expectation only when queries are made non-adaptively.

The algorithms of \cite{Kushilevitz,kl97}  {\it do} support adaptive queries. However, those algorithms though they have small query time, use large space; \cite{Kushilevitz} uses $\Omega(n^{O(1/\eps^2)})$ space for $c=1+\eps$, and \cite{kl97} uses $\Omega(n^d)$ space. The work of \cite{kl97} also presents another algorithm with memory and query/pre-processing times similar to our ADE data structure though specifically for Euclidean space. While both of these works provide algorithms with runtimes sublinear in $n$ (at the cost of large space complexity), they are specifically for finding the approximate single nearest neighbor (``$1$-NN'') and do not provide distance estimates to {\it all} points in the same query time (e.g.\ if one wanted to find the $k$ approximate nearest neighbors for a $k$-NN classifier).

\paragraph{Nonparametric estimation:}

While NNS is a vital algorithmic primitive for some fundamental methods in nonparametric estimation, it is inadequate for others, where a few near neighbors do not suffice or the number of required neighbors is unknown. For example, consider the case of kernel regression where a prediction for a query point, $q$, is given by $\hat{y} = \sum_{i = 1}^n w_i y_i$ where $w_i = \nicefrac{K(q, x_i)}{\sum_{i = 1}^n K(q, x_i)}$ for some kernel function $K$ and $y_i$ is the label for the $i^{th}$ data point. For this and other nonparametric models including SVMs, distance estimates to potentially every point in the dataset may be required \cite{kernelsmoothing,kernml,introknnnr,smoothing,localwl}. Even for simpler tasks like $k$-nearest neighbor classification or database search, it is often unclear what the right value of $k$ should be and is frequently chosen at test time based on the query point. Unfortunately, modifying previous approaches to return $k$ nearest neighbors instead of $1$, results in a factor $k$ increase in query time. Due to the ubiquity of such methods in practical applications, developing efficient versions deployable in adversarial settings is an important endeavor for which an adaptive ADE procedure is a useful primitive. 


\subsection{Overview of Techniques}
\label{sec:overview}

Our main idea is quite simple and generic, and thus, we believe it could be widely applicable to a number of other problem domains. In fact, it is so generic that it is most illuminating to explain our approach from the perspective of an arbitrary data structural problem instead of focusing on ADE specifically. Suppose we have a randomized Monte Carlo data structure $\mathcal D$ for some data structural problem that supports answering nonadaptive queries from some family $\mathcal Q$ of potential queries (in the case of ADE, $\mathcal Q = \mb{R}^d$, so that the allowed queries are the set of all $q\in \mb{R}^d$). Suppose further that $l$ independent instantiations of $\mathcal{D}$, $\{\mc{D}_i\}_{i = 1}^l$, satisfy the following:
\begin{equation}
  \forall q \in \mc{Q}: \sum_{i = 1}^l \bm{1} \lbrb{\text{$\mc{D}_i$ answers $q$ correctly}} \geq 0.9l \tag{Rep} \label{eq:rep}
\end{equation}

with high probability. Since the above holds {\it for all} $q\in\mathcal Q$, it is true even for any query in an adaptive sequence of queries. The Chernoff bound implies that to answer a query successfully with probability $1-\delta$, one can sample $r = \Theta(\log(1/\delta))$ indices $i_1,\ldots,i_r\in[l]$, query each $\mathcal D_{i_j}$ for $j\in[r]$, then return the majority vote (or e.g.\ median if the answer is numerical and correctness guarantees are approximate). Note that the runtime of this procedure is at most $r$ times the runtime of an individual $\mc{D}_i$ and this constitutes the main benefit of the approach: during queries not all copies of $\mathcal D$ must be queried, but only a random {\it sample}. Now, defining for any data structure, $\mc{D}$:
\begin{equation*}
  l^* (\mc{D}, \delta) \coloneqq \inf \{l > 0: \text{\ref{eq:rep} holds for $\mc{D}$ with probability at least $1 - \delta$}\},
\end{equation*}

the argument in the preceding paragraph now yields the following general theorem:

\begin{theorem}
  \label{thm:meta_thm}
  Let $\mc{D}$ be a randomized Monte Carlo data structure over a query space $\mc{Q}$, and $\delta\in(0,1/2)$ be given. If $l^* (\mc{D}, \delta)$ is finite, then there exists a data structure $\mc{D}'$, which correctly answers any $q \in \mc{Q}$, even in a sequence of adaptively chosen queries, with probability at least $1 - 2\delta$. Furthermore, the query time of $\mc{D}'$ is at most $O(\log 1 / \delta \cdot t_q)$ where $t_q$ is the query time of $\mc{D}$ and its space complexity and pre-processing time are at most $l^*(\mc{D}, \delta)$ times those of $\mc{D}$.
\end{theorem}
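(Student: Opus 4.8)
The plan is to realize the voting scheme from the overview and to separate the two ways it can fail. Set $l = l^*(\mathcal D,\delta)$, which is finite by hypothesis (and, since the number of copies is a positive integer, the infimum defining it is attained), so that with $l$ independent copies $\mathcal D_1,\dots,\mathcal D_l$ of $\mathcal D$ built from a joint seed $s$, the event $E \coloneqq \big\{\forall q\in\mathcal Q:\ \#\{i\in[l]:\mathcal D_i \text{ answers } q \text{ correctly}\}\ge 0.9\,l\big\}$ holds with probability at least $1-\delta$ over $s$. Define $\mathcal D'$ to store exactly these $l$ copies; this immediately gives the claimed factor-$l^*(\mathcal D,\delta)$ bounds on pre-processing time and space. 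On a query $q$, $\mathcal D'$ draws $r=\Theta(\log(1/\delta))$ indices $i_1,\dots,i_r$ independently and uniformly from $[l]$ using \emph{fresh} randomness $\rho$ generated at query time, queries $\mathcal D_{i_1},\dots,\mathcal D_{i_r}$ on $q$, and outputs the coordinatewise majority of the returned answers (or, when the answer is numerical with an approximate-correctness guarantee as in ADE, the coordinatewise median). Querying the $r$ copies costs $r\cdot t_q$, and the majority/median computation is $O(r)$ per output coordinate, hence dominated by the cost of writing the output; the total is $O(\log(1/\delta)\cdot t_q)$.

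The heart of the argument is the decoupling that neutralizes adaptivity. Fix an adaptive adversary that produces $q_1,q_2,\dots$, where $q_j$ is an arbitrary (w.l.o.g.\ deterministic) function of the answers $\mathcal D'$ returned on $q_1,\dots,q_{j-1}$; those answers are in turn functions of the seed $s$ and of the fresh randomness $\rho_1,\dots,\rho_{j-1}$ used on the first $j-1$ queries. For each fixed $j$ I will bound $\Pr[\mathcal D'\text{ errs on }q_j]\le \Pr[\neg E] + \Pr[\mathcal D'\text{ errs on }q_j\mid E]\le \delta + \Pr[\mathcal D'\text{ errs on }q_j\mid E]$. For the second term, condition further on a realization of $(s,\rho_1,\dots,\rho_{j-1})$ with $s\in E$: this determines $q_j$, and because $E$ is a statement about $s$ that quantifies over \emph{all} $q\in\mathcal Q$ at once, it applies to this particular ($s$-dependent, random) $q_j$, so the set $G\subseteq[l]$ of copies correct on $q_j$ satisfies $|G|\ge 0.9l$. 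Crucially, $\rho_j$ is drawn independently of $(s,\rho_1,\dots,\rho_{j-1})$, hence independently of $q_j$, so conditioned on this history the samples $i_1,\dots,i_r$ are still i.i.d.\ uniform on $[l]$ and each lies in $G$ with probability $|G|/l\ge 0.9$.

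It then only remains to run a Chernoff/Hoeffding bound: the number of sampled copies in $G$ is a sum of $r$ i.i.d.\ Bernoulli variables with mean at least $0.9$, so it is below $r/2$ with probability $e^{-\Omega(r)}$, and tuning the constant in $r=\Theta(\log(1/\delta))$ makes this at most $\delta$. Whenever more than half of the sampled copies are correct, each of their output coordinates is good, so the coordinatewise majority — resp.\ median, using the elementary fact that the median of a list of reals more than half of which lie in an interval also lies in that interval — is correct. Hence $\Pr[\mathcal D'\text{ errs on }q_j\mid E]\le\delta$, giving $\Pr[\mathcal D'\text{ errs on }q_j]\le 2\delta$ uniformly over $j$ and over the adversary, which is the claim. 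I expect the only genuinely delicate point to be the conditioning in the previous paragraph: one must use, for the $j$-th query, randomness that is independent of everything on which $q_j$ could depend, which is exactly why the index sampling must be done freshly at query time rather than fixed during pre-processing (where it would become part of the exploitable seed).
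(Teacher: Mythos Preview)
Your proposal is correct and follows essentially the same approach as the paper, which proves the theorem via the short argument in the paragraph immediately preceding it: instantiate $l=l^*(\mathcal D,\delta)$ copies so that \ref{eq:rep} holds with probability $1-\delta$, then answer each query by sampling $r=\Theta(\log(1/\delta))$ copies and taking a majority/median, invoking Chernoff. Your write-up is in fact more careful than the paper's sketch about the role of fresh per-query sampling randomness and the conditioning that decouples $q_j$ from $\rho_j$, but the underlying idea is identical.
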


In the context of ADE, the underlying data structures will be randomized linear sketches and the data structural problem we require them to solve is length estimation; that is, given a vector $v$, we require that at least $0.9l$ of the linear sketches accurately represent its length (See \cref{sec:analysis}). In our applications, we select a random sample of $r = \Theta (\log n / \delta)$ linear sketches, use them obtain $r$ estimates of $\norm{q - x_i}_p$ and aggregate these estimates by computing their median. The argument in the previous paragraph along with a union bound shows that this strategy succeeds in returning accurate estimates of $\norm{q - x_i}_p$ with probability at least $1 - \delta$. 

The main impediment to deploying \cref{thm:meta_thm} in a general domain is obtaining a reasonable bound on $l$ such that \ref{eq:rep} holds with high probability. In the case that $\mc{Q}$ is finite, the Chernoff bound implies the upper bound $l^*(\mathcal D, \delta) = O(\log (\abs{\mc{Q}}/\delta))$. However, since $\mc{Q} = \mb{R}^d$ in our context, this is nonsensical. Nevertheless, we show that a bound of $l = \tilde{O}(d)$ suffices for the ADE problem for $\ell_p$ norms with $0 < p \leq 2$ and can be tightened to $O(d)$ for the Euclidean case. We believe that reasonable bounds on $l$ establishing \ref{eq:rep} can be obtained for a number of other applications yielding a generic procedure for constructing adaptive data structures from nonadaptive ones in all these scenarios. Indeed, the vast literature on Empirical Process Theory yields bounds of precisely this nature which we exploit as a special case in our setting.



\paragraph{Organization:} For the remainder for the paper, we review preliminary material and introduce necessary notation in \cref{sec:prelim}, formally present our algorithms in \cref{sec:alg} and analyze their run time and space complexities in \cref{sec:analysis} before concluding our discussion in \cref{sec:conc}.





\section{Preliminaries and Notation}
\label{sec:prelim}

We use $d$ to represent dimension, and $n$ is the number of data points. For a natural number $k$, $[k]$ denotes $\{1,\ldots,k\}$. For $0 < p \leq 2$, we will use $\norm{v}_p = (\sum_{i = 1}^d \abs{v_i}^p)^{1 / p}$ to denote the $\ell_p$ ``norm'' of $v$ (For $p < 1$, this is technically not a norm). For a matrix $M \in \mathbb{R}^{l \times m}$, $\norm{M}_F$ denotes the Frobenius norm of $M$: $(\sum_{i,j} M_{i,j}^2)^{1/2}$. Henceforth, when the norm is not specified, $\norm{v}$ and $\norm{M}$ denote the standard Euclidean ($p = 2$) and spectral norms of $v$ and $M$ respectively. For a vector $v \in \mathbb{R}^d$ and real valued random variable $Y$, we will abuse notation and use $\median (v)$ and $\median (Y)$ to denote the median of the entries of $v$ and the distribution of $Y$ respectively. For a probabilistic event, $\mathcal E$, we use $\bm{1}\lbrb{\mathcal E}$ to denote the indicator random variable for $\mathcal E$. Finally, we will use $\mb{S}_p^d = \{x \in \mb{R}^d: \norm{x}_p = 1\}$

One of our algorithms makes use of the $p$-stable sketch of Indyk \cite{Indyk06}. Recall the following concerning $p$-stable distributions:
\begin{definition}{\cite{zolotarev,nolan}}
    \label{def:stab}
    For $p\in(0,2]$, there exists a probability distribution, $\stable(p)$, called the $p$-stable distribution with $\mb{E}[e^{-itZ}] = e^{- \abs{t}^p}$ for $Z \thicksim \stable(p)$. Furthermore, for any $n$, vector $v \in \mb{R}^n$ and $Z_1, \dots Z_n$ are iid samples from $\stable(p)$, we have $\sum_{i = 1}^n v_i Z_i \thicksim \norm{v}_p Z$ with $Z \thicksim \stable(p)$. 
\end{definition}

Note the Gaussian distribution is $2$-stable, and hence, these distributions can be seen as generalizing the stable properties of a gaussian distribution for norms other than Euclidean. These distributions have found applications in steaming algorithms and approximate nearest neighbor search \cite{Indyk06,lshpstable} and moreover, it is possible to efficiently obtain samples from them \cite{cms}. We will use $\medp$ to denote $\median(|Z|)$ where $Z\sim\stable(p)$. Finally, we will use $\mc{N}(0, \sigma^2)$ to denote the distribution function of a normal random variable with mean $0$ and variance $\sigma^2$.

\section{Algorithms}
\label{sec:alg}


As previously mentioned, our construction combines the approach outlined in \cref{sec:overview} with known linear sketches \cite{jl,Indyk06} and a net argument. Both Theorem~\cref{thm:mainThm,thm:mainThmEuc} are proven using this recipe, with the only differences being a swap in the underlying data structure (or linear sketch) being used and the argument used to establish a bound on $l$ satisfying \ref{eq:rep} (discussed in Section~\ref{sec:overview}). Concretely, in our solutions to ADE we pick $l = \tilde O(d)$ linear sketch matrices $\Pi_1,\ldots,\Pi_l\in\mb{R}^{m\times d}$ for $m = O(1/\eps^2)$. The data structure stores $\Pi_i x_j$ for all $i\in[l], j\in[n]$. Then to answer a query $q\in\mb{R}^d$:

\begin{enumerate}
    \item Select a set of $r = O(\log n)$ indices $j_k\in[l]$ uniformly at random, with replacement
    \item For each $i\in[n], k\in[r]$, obtain distance estimates $\tilde d_{i,k}$ based on $\Pi_{j_k} x_i$ (stored) and $\Pi_{j_k} q$. These estimates are obtained from the underlying sketching algorithm (for $\ell_2$ it is $\|\Pi_{j_k} q - \Pi_{j_k}x_i\|_2$ \cite{jl}, and for $\ell_p$ for $0<p<2$ it is $\median(|\Pi_{j_k} q - \Pi_{j_k}x_i|)/\medp$ \cite{Indyk06} where $|\cdot|$ for a vector denotes entry-wise absolute value.
    \item Return distance estimates $\tilde d_1,\ldots,\tilde d_n$ with $\tilde d_i = \median(\tilde d_{i,1}, \ldots, \tilde d_{i,r})$
\end{enumerate}

As seen above, the only difference between our algorithms for the Euclidean and $\ell_p$ norm cases are the distributions used for the $\Pi_j$, as well as the method for distance estimation in Step 2.
Since the algorithms are quite similar (though the analysis for the Euclidean case is sharpened to remove logarithmic factors), we discuss the case of $\ell_p$ ($0<p<2$) in this section and defer our special treatment of the Euclidean case to \cref{sec:eucproofs}.

\cref{alg:preProcessP} constructs the linear embeddings for the case $\ell_p$ norms, $0<p<2$, using \cite{Indyk06}. The algorithm takes as input the dataset $X$, an accuracy parameter $\eps$ and failure probability $\delta$, and it constructs a data structure containing the embedding matrices and the embeddings $\Pi_j x_i$ of the points in the dataset. 
The linear embeddings are subsequently used in \cref{alg:procQueryP} to answer queries. 

        
\begin{algorithm}[H]
    \begin{algorithmic}
        \State \textbf{Input: } $X = \{x_i \in \mathbb{R}^d\}_{i = 1}^n$, Accuracy $\eps\in(0,1)$, Failure Probability $\delta\in(0,1)$
        
        \State $l \gets \Theta \lprp{\lprp{d + \log 1 / \delta}\log d / \eps}$
        \State $m \gets \Theta \lprp{\frac{1}{\eps^2}}$
        \State For $j \in [l]$, let $\Pi_j \in \mb{R}^{m \times d}$ with entries drawn iid from $\stable(p)$
       \State \textbf{Output: } $\mc{D} = \{\Pi_j, \{\Pi_j x_i\}_{i = 1}^n\}_{j \in [l]}$
    \end{algorithmic}
    \caption{Compute Data Structure ($\ell_p$ space for $0<p<2$, based on \cite{Indyk06})}
    \label{alg:preProcessP}
\end{algorithm}
\begin{algorithm}[H]
    \begin{algorithmic}
        \State \textbf{Input: } Query $q$, Data Structure $\mc{D} = \{\Pi_j, \{\Pi_j x_i\}_{i = 1}^n\}_{j \in [l]}$, Failure Probability $\delta\in(0,1)$
        
        \State $r \gets \Theta (\log(n/\delta))$
        \State Sample $j_1, \dots j_r$ iid with replacement from $[l]$

        \State For $i \in [n], k \in [r]$, let $\tilde d_{i, k} \gets \frac{\median(\abs{\Pi_{j_k} (q - x_i)})}{\medp}$ 
        \State For $i \in [n]$, let $\tilde d_i \gets \median(\{\tilde d_{i, k}\}_{k = 1}^r)$
        
        \State \textbf{Output: } $\{\tilde d_i\}_{i = 1}^n$
    \end{algorithmic}
    \caption{Process Query ($\ell_p$ space for $0<p<2$, based on \cite{Indyk06})}
    \label{alg:procQueryP}
  \end{algorithm}

  \section{Analysis}
\label{sec:analysis}



In this section we prove our main theorem for $\ell_p$ spaces, $0<p<2$. We then prove \cref{thm:mainThmEuc} for the Euclidean case in \cref{sec:eucproofs}.

\begin{theorem}
  \label{thm:mainThm}
    For any $0<\delta<1$ and any $0<p< 2$, there is a data structure for the ADE problem in $\ell_p$ space that
    succeeds on any query with probability at least $1 - \delta$, even in a sequence of adaptively chosen queries. 
%
%
    Furthermore, the time taken by the data structure to process each query is $\tilde{O}\lprp{\eps^{-2}(n + d) \log 1 / \delta}$, the space complexity is $\tilde{O} \lprp{\eps^{-2}(n + d)(d + \log 1 / \delta)}$, and the pre-processing time is $\tilde{O} \lprp{\eps^{-2} nd(d + \log 1 / \delta)}$.
\end{theorem}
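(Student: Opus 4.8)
The plan is to apply the meta-strategy of \cref{thm:meta_thm} with $\mc{D}$ taken to be a single Indyk $p$-stable sketch: $\mc{D}$ stores one matrix $\Pi\in\mb{R}^{m\times d}$ with iid $\stable(p)$ entries ($m=\Theta(1/\eps^2)$) together with $\{\Pi x_i\}_{i=1}^n$, its query space is $\mc{Q}=\mb{R}^d$ (the vector $v$ whose $\ell_p$ length is to be estimated), and it ``answers $v$ correctly'' when $\abs{\median(\abs{\Pi v})/\medp-\norm{v}_p}\le\eps\norm{v}_p$. Once we know that $l^*(\mc{D},\delta/2)\le l=\tilde O(d+\log 1/\delta)$ copies $\Pi_1,\dots,\Pi_l$ suffice (as set in \cref{alg:preProcessP}), the adaptive ADE data structure is exactly \cref{alg:preProcessP,alg:procQueryP}: store all $\Pi_j x_i$, and at query time sample $r=\Theta(\log n/\delta)$ sketch indices and report per-coordinate medians. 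Everything after the bound on $l$ — the median aggregation, the union bound over $i\in[n]$, and the time/space accounting — is routine, so the bulk of the work is establishing \eqref{eq:rep} for $\mc{D}$ with $l=\tilde O(d+\log 1/\delta)$.

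\textbf{Single-sketch accuracy and reduction to the sphere.} First I would invoke the analysis of \cite{Indyk06}: since the entries of $\Pi$ are iid $\stable(p)$, for a fixed $v$ the vector $\Pi v$ is distributed as $\norm{v}_p$ times a vector of $m$ iid $\stable(p)$ samples, and the sample median of $m=\Theta(1/\eps^2)$ such absolute values lies within a $(1\pm\eps/2)$ factor of $\medp$ except with a small constant probability (using that the density of $\abs{Z}$, $Z\sim\stable(p)$, is bounded below by a positive constant in a neighborhood of $\medp$, so that a Chernoff bound controls how many samples land below $(1-\eps/2)\medp$ or above $(1+\eps/2)\medp$). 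Hence for any fixed $v\ne 0$, $\mb{P}[\abs{\median(\abs{\Pi v})/\medp-\norm{v}_p}\le(\eps/2)\norm{v}_p]\ge 0.99$; call such a $\Pi$ \emph{robustly good} for $v$. Since this event and $\norm{\cdot}_p$ are both positively homogeneous in $v$, it suffices to control all unit vectors $v\in\mb{S}_p^d$, the case $v=0$ being answered exactly.

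\textbf{The net argument (the crux).} Take an $\eta$-net $N\subseteq\mb{S}_p^d$ with $\abs{N}\le(3/\eta)^d$ (standard volumetric bound, using the $p$-triangle inequality $\norm{x+y}_p^p\le\norm{x}_p^p+\norm{y}_p^p$ when $p<1$), with $\eta=1/\poly(d/(\eps\delta))$ to be fixed below. For each fixed $v\in N$, a Chernoff bound over the $l$ independent events ``$\Pi_j$ robustly good for $v$'' (each of probability $\ge 0.99$) gives $\mb{P}[\#\{j:\Pi_j\text{ robustly good for }v\}<0.95l]\le e^{-\Omega(l)}$, so a union bound over $N$ fails with probability $\le(3/\eta)^d e^{-\Omega(l)}\le\delta/4$ once $l=\Theta((d+\log 1/\delta)\log(d/(\eps\delta)))$. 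Separately, the power-law tail $\mb{P}[\abs{Z}>T]=O(T^{-p})$ of $\stable(p)$ and a union bound over the $lmd$ entries show that with probability $\ge 1-\delta/4$ every entry of every $\Pi_j$ is at most $M=\poly(lmd/\delta)$ in absolute value; condition on both events (failure $\le\delta/2$). To pass from $N$ to all of $\mb{S}_p^d$, given $v'\in\mb{S}_p^d$ pick $v\in N$ with $\norm{v-v'}_p\le\eta$; since $\median(\cdot)$ is $1$-Lipschitz in $\ell_\infty$ and $\infnorm{\abs{\Pi_j v}-\abs{\Pi_j v'}}\le\infnorm{\Pi_j(v-v')}\lesssim dM\norm{v-v'}_p$, any $\Pi_j$ robustly good for $v$ satisfies $\abs{\median(\abs{\Pi_j v'})/\medp-1}\le dM\eta/\medp+\eps/2\le\eps$ provided $\eta\le\eps\medp/(2dM)$, i.e.\ $\Pi_j$ answers $v'$ correctly to accuracy $\eps$. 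Hence $\#\{j:\Pi_j\text{ correct for }v'\}\ge 0.95l>0.9l$ for every $v'\in\mb{S}_p^d$, hence (by homogeneity) for every $v\in\mb{R}^d$, which is precisely \eqref{eq:rep} for $\mc{D}$; thus $l^*(\mc{D},\delta/2)\le l=\tilde O(d+\log 1/\delta)$. I expect this step to be the main obstacle: controlling the non-smooth median estimator under the heavy-tailed $\stable(p)$ entries sharply enough to push the net argument through with only $\tilde O(d)$ copies (the Euclidean case of \cref{sec:eucproofs} replaces this by a tighter empirical-process bound that removes the extra $\log$ factors).

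\textbf{Assembling the data structure.} Correctness, even against adaptive queries, follows the recipe of \cref{sec:overview}: condition on \eqref{eq:rep}, which holds with probability $\ge 1-\delta/2$ and is a deterministic property of the stored matrices $\Pi_1,\dots,\Pi_l$. For any query $q$ — chosen adversarially from the transcript or not — and any $i\in[n]$, at least $0.9l$ of the sketches are correct for $v=q-x_i$, so the freshly drawn indices $j_1,\dots,j_r$ (independent of $q$ and of the history) hit a strict majority of correct sketches except with probability $e^{-\Omega(r)}\le\delta/(2n)$, in which case $\tilde d_i=\median(\tilde d_{i,1},\dots,\tilde d_{i,r})$ is a $(1\pm\eps)$-approximation of $\norm{q-x_i}_p$; a union bound over $i\in[n]$ gives total failure probability $\le\delta$. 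For the resources, with $l=\tilde O(d+\log 1/\delta)$ and $m=\Theta(1/\eps^2)$: the data structure stores $l$ matrices in $\mb{R}^{m\times d}$ and $ln$ vectors in $\mb{R}^m$, i.e.\ $O(lm(n+d))=\tilde O(\eps^{-2}(n+d)(d+\log 1/\delta))$; pre-processing performs $ln$ matrix--vector products at cost $O(md)$ each (plus $O(lmd)$ $\stable(p)$ samples), i.e.\ $O(lmnd)=\tilde O(\eps^{-2}nd(d+\log 1/\delta))$; and a query forms $\Pi_{j_k}q$ for $k\in[r]$ at cost $O(rmd)$ and then $rn$ medians of $m$-vectors at cost $O(rmn)$, i.e.\ $O(rm(n+d))=\tilde O(\eps^{-2}(n+d)\log 1/\delta)$. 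These match the claimed bounds.
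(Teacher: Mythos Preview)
Your proposal is essentially correct and follows the same route as the paper: Indyk's single-sketch guarantee, an $\eta$-net of $\mb{S}_p^d$, Chernoff plus union bound on the net, and a Lipschitz extension of the median estimator to pass from the net to the whole sphere, followed by the sampling/median aggregation and routine resource accounting.

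There is one technical difference worth flagging. You control the perturbation step by bounding \emph{every} entry of \emph{every} $\Pi_j$ by $M=(lmd/\delta)^{O(1/p)}$ with probability $1-\delta/4$; because $M$ then depends on $\delta$, your net must have scale $\eta\sim\eps/(dM)$, and the union bound forces $l\gtrsim d\log(1/\eta)\gtrsim (d/p)\log(1/\delta)$. This gives $l=\tilde O(d\log(1/\delta))$ rather than the paper's $l=\Theta((d+\log 1/\delta)\log(d/\eps))$, so your space and pre-processing bounds pick up an extra multiplicative $\log(1/\delta)$ on the $d$ term and do not quite match the stated $\tilde O(\eps^{-2}(n+d)(d+\log 1/\delta))$ (recall $\tilde O$ here hides only $\log(nd/\eps)$, not $\log(1/\delta)$). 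The paper avoids this by bounding $\norm{\Pi_j}_F\le\theta=C(dm)^{(2+p)/(2p)}$ for each $j$ with \emph{constant} probability $0.975$ (so $\theta$ is $\delta$-free), and then using a second Chernoff bound to conclude that at least $0.95l$ of the $\Pi_j$ satisfy $\norm{\Pi_j}_F\le\theta$; intersecting this with ``robustly good for the net point'' still leaves $\ge 0.9l$ indices, and the Lipschitz step uses $\norm{\Pi_j(x-v)}_\infty\le\norm{\Pi_j}_F\norm{x-v}_2$ (the net is taken in $\ell_2$, which is fine since $\mb{S}_p^d\subset B_2$ for $p\le 2$). Swapping your all-entries-high-probability bound for this per-matrix-constant-probability bound is a one-line change and recovers the claimed parameters exactly.
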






The data structures, $\mc{D}_j$, that we use in our instantiation of the strategy described in \cref{sec:overview} are the random linear sketches, $\Pi_j$, and the data structural problem they implicitly solve is length estimation; that is, given any vector $v \in \mb{R}^d$, at least $0.9l$ of the linear sketches, $\Pi_j$, accurately represent its length. To ease exposition, we will now directly reason about the matrices $\Pi_j$ through the rest of the proof. We start with a theorem of \cite{Indyk06}:

\begin{theorem}[{\cite{Indyk06}}]\label{thm:indyk}
Let $0<p<2$, $\epsilon \in (0,1)$ and $m$ as in \cref{alg:preProcessP}. Then, for $\Pi \in \mb{R}^{m \times d}$ with entries drawn iid from $\stable(p)$ and any $v \in \mb{R}^d$:
\begin{equation*}
  \mb{P} \lbrb{\lprp{1 - \frac{\eps}{2}} \norm{v}_p \leq \frac{\median (\abs{\Pi v})}{\medp} \leq \lprp{1 + \frac{\eps}{2}} \norm{v}_p} \geq 0.975
\end{equation*}
\end{theorem}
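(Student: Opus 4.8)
The plan is to reduce the statement to a one-dimensional fact about empirical medians. First I would invoke $p$-stability (\cref{def:stab}): each row of $\Pi$ is an i.i.d.\ vector with $\stable(p)$ entries, so each coordinate $(\Pi v)_k = \sum_{j} \Pi_{kj} v_j$ is distributed as $\norm{v}_p Z$ for $Z \sim \stable(p)$, and the $m$ rows are independent; hence the entries of $\abs{\Pi v}$ are $m$ i.i.d.\ copies of $\norm{v}_p \abs{Z}$. Both sides of the claimed inequality are homogeneous of degree $1$ in $v$ (and the inequality is trivial when $v = 0$), so I may assume $\norm{v}_p = 1$. Then $\median(\abs{\Pi v})$ is the empirical median of $m$ i.i.d.\ copies of $\abs{Z}$, whose population median equals $\medp$ by definition of $\medp$, so the estimator is correctly centered and it remains to control the deviation of the empirical median from $\medp$.

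Next I would set up the anti-concentration estimate that feeds the concentration step. Let $F$ be the CDF of $\abs{Z}$ with $Z \sim \stable(p)$. The analytic input is that for $0 < p < 2$ the symmetric $p$-stable density is continuous and strictly positive on all of $\mb{R}$ (\cite{zolotarev,nolan}), so $\abs{Z}$ has a continuous density that is strictly positive at $\medp$. Writing $g(\eps) := \tfrac12 - F((1-\eps/2)\medp)$, we have $g(0) = 0$, $g$ is continuous with positive right derivative $\tfrac12 \medp f_{\abs{Z}}(\medp) > 0$ at $0$, and $g(\eps) > 0$ for every $\eps \in (0,1]$; by compactness of $[0,1]$ this gives $g(\eps) \ge c_p \eps$ on $(0,1]$ for a constant $c_p > 0$ depending only on $p$. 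The symmetric bound $F((1+\eps/2)\medp) \ge \tfrac12 + c_p \eps$ follows the same way, shrinking $c_p$ if needed.

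I would then finish with a binomial tail bound. The event $\median(\abs{\Pi v}) < 1 - \eps/2$ forces at least half of the $m$ samples to fall below $(1-\eps/2)\medp$; the count of such samples is $\mathrm{Bin}(m,q)$ with $q = F((1-\eps/2)\medp) \le \tfrac12 - c_p\eps$, so its mean is at most $m/2 - c_p\eps m$ and Hoeffding's inequality bounds this probability by $\exp(-2 c_p^2 \eps^2 m)$. Choosing the hidden constant in $m = \Theta(1/\eps^2)$ large enough (as a function of $p$ only) makes this at most $0.0125$, and the same argument with the other inequality handles the event $\median(\abs{\Pi v}) > 1 + \eps/2$. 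A union bound over the two tails gives total failure probability at most $0.025$, i.e.\ success probability at least $0.975$, as claimed.

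The only nontrivial point — and hence the main obstacle — is the density estimate of the second paragraph: one has to quote the right structural facts about stable laws (continuity and strict positivity of the density, in particular at the median $\medp$, which is a finite positive constant) and then upgrade the resulting infinitesimal lower bound on $g$ into a bound linear in $\eps$ that is uniform over all of $(0,1)$, not merely asymptotic as $\eps \to 0$. Everything else — the $p$-stability reduction and the Chernoff step — is routine. I would also remark that this lemma is precisely a restatement of the estimator analysis in \cite{Indyk06}, so in the write-up one may simply cite it; the sketch above is how the argument goes if reproved from scratch.
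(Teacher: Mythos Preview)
The paper does not prove this statement at all: it is stated as a black-box citation of \cite{Indyk06} (with a pointer to \cite{KaneNW10} for a derandomized version), so there is no ``paper's own proof'' to compare against. Your sketch is correct and is exactly the standard argument behind Indyk's estimator --- reduce via $p$-stability and homogeneity to the empirical median of $m$ i.i.d.\ copies of $\abs{Z}$, use positivity of the stable density at $\medp$ to get a linear-in-$\eps$ gap in the CDF, and finish with a Hoeffding/binomial tail bound --- so you have essentially reproduced the cited result. Your closing remark is the right call: in the write-up, citing \cite{Indyk06} is what the paper does and what you should do.
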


One can also find an analysis of \cref{thm:indyk} that only requires $\Pi$ to be pseudorandom and thus require less memory to store; see the proof of Theorem 2.1 in \cite{KaneNW10}. We now formalize the \ref{eq:rep} requirement on the data structures $\{\Pi_j\}_{j = 1}^l$ in our particular context:


\begin{definition}
    \label{def:eprepresentative}
    Given $\eps > 0$ and $0 < p < 2$, we say that a set of matrices $\{\Pi_{j} \in \mb{R}^{m \times d}\}_{j = 1}^l$ is {\it $(\eps, p)$-representative} if:
    \begin{equation*}
        \forall \norm{x}_p = 1: \sum_{j = 1}^l \bm{1} \lbrb{(1 - \eps) \leq \frac{\median(\abs{\Pi_j x})}{\medp} \leq (1 + \eps)} \geq 0.9l.
    \end{equation*}
  \end{definition}



We now show that $\{\Pi_j\}_{j = 1}^l$, output by \cref{alg:preProcessP}, are $(\eps, p)$-representative.

\begin{lemma}
    \label{lem:dsUnif}
    For $p\in(0,2), \eps,\delta\in(0,1)$  and $l$ as in \cref{alg:preProcessP}, the collection $\{\Pi_j\}_{j = 1}^l$ output by \cref{alg:preProcessP} are $(\eps,p)$-representative with probability at least $1-\delta/2$.
%
%
  \end{lemma}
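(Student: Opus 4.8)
The plan is to establish the $(\eps,p)$-representative property via a net argument over the unit sphere $\mb{S}_p^d$, combined with the pointwise guarantee of \cref{thm:indyk} and a Chernoff bound to boost the $0.975$ success probability per sketch to a $0.95l$-majority that holds on the net, and then transfer this from the net to all of $\mb{S}_p^d$ by a Lipschitz/stability argument. Concretely, first I would fix a parameter $\eta$ (polynomially small in $n,d,\eps$) and take a minimal $\eta$-net $\mathcal{N}$ of $\mb{S}_p^d$ in, say, the $\ell_2$ metric; a standard volumetric bound gives $\abs{\mathcal{N}} \le (C/\eta)^d$, so $\log\abs{\mathcal{N}} = O(d\log(1/\eta)) = \tilde O(d)$. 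For each fixed $x \in \mathcal{N}$, \cref{thm:indyk} says each individual $\Pi_j$ satisfies the $(1\pm\eps/2)$ sandwich with probability $\ge 0.975$; since the $\Pi_j$ are independent, a Chernoff bound gives that at least $0.95l$ of them satisfy it except with probability $\exp(-\Omega(l))$. Choosing $l = \Theta((d + \log(1/\delta))\log(1/\eta)) = \Theta((d+\log 1/\delta)\log d/\eps)$ (matching \cref{alg:preProcessP} once $\eta$ is set polynomially) makes this failure probability at most $\delta/(2\abs{\mathcal{N}})$, so a union bound over $\mathcal{N}$ yields: with probability $\ge 1-\delta/2$, for every $x\in\mathcal{N}$ at least $0.95l$ of the sketches report $\frac{\median(\abs{\Pi_j x})}{\medp} \in [(1-\eps/2)\|x\|_p,(1+\eps/2)\|x\|_p]$.

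The main obstacle is the transfer from the net to the whole sphere, since the estimator $x \mapsto \median(\abs{\Pi_j x})/\medp$ is not obviously Lipschitz with a bound independent of the (potentially large) entries of $\Pi_j$. The clean way around this is to observe that $\median(\abs{\Pi_j x})$ is itself a seminorm-like quantity: for any $x,y$, $\abs{\median(\abs{\Pi_j x}) - \median(\abs{\Pi_j y})} \le \median(\abs{\Pi_j(x-y)}) \le \|\Pi_j(x-y)\|_\infty \le \|\Pi_j\|_{p\to\infty}\|x-y\|_p$ — more carefully, one uses that for any vectors $a,b$, the median of $\abs{a}$ and the median of $\abs{b}$ differ by at most $\max_i \abs{a_i - b_i}$ when $a,b$ have the same length, which follows because the sorted order changes by at most the max coordinate perturbation. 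So I would next bound $\max_{j\in[l]} \|\Pi_j\|_{p\to\infty} = \max_{j,k} \|(\Pi_j)_{k,\cdot}\|_{p^*}$ (or the relevant operator norm) with high probability: each entry is a $p$-stable variable, which has heavy tails, but the maximum of $O(lmd) = \poly(n,d,1/\eps)$ such entries is $\poly(n,d,1/\eps)$ with probability $\ge 1-\delta/4$, say, by a union bound on the tail $\mb{P}(\abs{Z} > t) = O(t^{-p})$. Hence choosing $\eta$ polynomially small (e.g. $\eta = (\eps\delta/(nd))^{C}$ for a suitable constant) makes the additive perturbation of the estimate when moving from $x\in\mb{S}_p^d$ to its nearest net point $x'$ at most, say, $\eps/2 \cdot \min(1, \|x'\|_p)$ — here one also needs the net point's value $\median(\abs{\Pi_j x'})/\medp$ to be at least a constant, which holds on the good event since it's within $(1\pm\eps/2)$ of $\|x'\|_p$. (One subtlety: the net is on the sphere but the nearest net point to $x$ on the sphere is within $\eta$; alternatively net the ball and rescale.)

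Combining these: on the intersection of the two good events (probability $\ge 1-\delta/2$ after adjusting constants), for every $x$ with $\|x\|_p=1$, pick $x'\in\mathcal{N}$ with $\|x-x'\|_p \le \eta$; for each of the $\ge 0.95l$ sketches good at $x'$, the triangle-type inequality above shows $\frac{\median(\abs{\Pi_j x})}{\medp}$ differs from $\frac{\median(\abs{\Pi_j x'})}{\medp}$ by at most $\frac{1}{\medp}\max_j\|\Pi_j\|_{p\to\infty}\cdot\eta \le \eps/2$, so $\frac{\median(\abs{\Pi_j x})}{\medp} \in [(1-\eps/2)(1-\eps/2), (1+\eps/2)+\eps/2] \subseteq [1-\eps, 1+\eps]$ (absorbing the lower-order $\eps^2$ term, adjusting the per-point accuracy in \cref{thm:indyk} to $\eps/4$ if one wants to be careful). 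Thus at least $0.9l$ of the sketches (in fact $0.95l$) satisfy the $(1\pm\eps)$ condition for this $x$, which is exactly $(\eps,p)$-representativeness. I would present the net-point stability inequality as a short standalone claim, then assemble the Chernoff-plus-union-bound and the operator-norm tail bound, and close with the $\eta$ and $l$ bookkeeping to confirm consistency with \cref{alg:preProcessP}.
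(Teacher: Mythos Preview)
Your approach is essentially the paper's: net $\mb{S}_p^d$, apply \cref{thm:indyk} plus Chernoff pointwise on the net, union bound, then transfer via $\abs{\median(\abs{\Pi_j x})-\median(\abs{\Pi_j v})}\le\infnorm{\Pi_j(x-v)}$ and an operator-norm bound on $\Pi_j$ coming from the $p$-stable tail $\mb{P}(|Z|>t)=O(t^{-p})$. So in spirit you are doing exactly what the paper does.

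There is, however, one point where your bookkeeping does not close with the $l$ stated in \cref{alg:preProcessP}. You propose to control the norms of the $\Pi_j$ \emph{uniformly over all $j\in[l]$} with probability $1-\delta/4$, via a union bound over all $O(lmd)$ entries. Because $p$-stable tails are only polynomial, this forces your entry bound (and hence your net granularity $\eta$) to depend on $1/\delta$: concretely $\eta$ must shrink like a power of $\delta$, so $\log(1/\eta)$ picks up a $\log(1/\delta)$ term, and the required $l\gtrsim d\log(1/\eta)$ then carries a $d\log(1/\delta)$ contribution. The algorithm's $l=\Theta\big((d+\log(1/\delta))\log(d/\eps)\big)$ has no such term, so your chain of inequalities does not go through with that $l$. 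The paper sidesteps this by bounding $\norm{\Pi_j}_F\le\theta=C(dm)^{(2+p)/(2p)}$ with \emph{constant} probability ($\ge 0.975$) \emph{per sketch}, then using a second Chernoff bound to conclude that at least $0.95l$ of the $\Pi_j$ satisfy this bound. Crucially $\theta$, and hence the net scale $\gamma=\Theta(\eps/\theta)$, is independent of $\delta$; the $\log(1/\delta)$ enters only additively through the two Chernoff bounds. Intersecting the $0.95l$ accurate sketches at a net point with the $0.95l$ norm-bounded sketches gives the $0.9l$ in \cref{def:eprepresentative}. If you swap your global norm bound for this per-sketch-then-Chernoff device, your argument becomes the paper's.
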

  \begin{proof}
We start with the simple lemma:

\begin{lemma}
    \label{lem:onePointProj}


        Let $0 < p < 2$, $C >0$ a sufficiently large constant. Suppose $\Pi \in \mathbb{R}^{m \times d}$ is distributed as follows: the $\Pi_{ij}$ are iid with $\Pi_{ij} \thicksim \stable(p)$, with $m$ as in \cref{alg:preProcessP}. Then
    \begin{equation*}
        \mb{P}(\norm{\Pi}_F \leq C (dm)^{(2 + p) / (2p)}) \ge 0.975
    \end{equation*}
\end{lemma}

\begin{proof}
    From \cref{cor:stabInt} a $p$-stable random variable $Z$ satisfies $\mb{P}(|Z| \ge t) = O(t^{-p})$. Thus by the union bound for large enough constant $C$:

    \begin{equation*}
        \mb{P} \lbrb{\exists i,j: \abs{\Pi_{ij}} \geq C (dm)^{1 / p}} \leq \sum_{i,j} \mb{P} \lbrb{\abs{\Pi_{ij}} \geq C (dm)^{1 / p}} \leq \frac{1}{40}.
    \end{equation*}

    Therefore, we have that with probability probability at least $0.975$, $\norm{\Pi}_F \leq C (dm)^{(2 + p) / (2p)}$.

\end{proof}

Now, let $\theta = C (dm)^{(2 + p) / (2p)}$ and define $\mc{B}\subset \mb{R}^d$ to be a $\gamma$-net (\cref{def:net}) of the unit sphere $\mb{S}_p^d$ under $\ell_2$ distance, with $\gamma = \Theta(\eps (dm)^{- (2 + p) / (2p)})$. Recall that $\mc{B}$ being a $\gamma$-net under $\ell_2$ means for all $x\in\mb{S}_p\exists x'\in \mc{B}\ s.t.\ \|x-x'\|_2 \le \gamma$ and that we may assume $\abs{\mc{B}} \leq (3 / \gamma)^{d}$ (\cref{lem:packlp}).


By \cref{thm:indyk,thm:hoeff} and our setting of $l$, we have that for any $v \in \mc{B}$:
\begin{equation*}
  \mb{P} \lbrb{\sum_{j = 1}^l \bm{1} \lbrb{(1 - \eps / 2)\medp \leq \median(\abs{\Pi_j v}) \leq (1 + \eps / 2)\medp} \geq 0.95l} \geq 1 - \frac{\delta}{4\abs{\mc{B}}}.
\end{equation*}
Therefore, the above condition holds for all $v \in \mc{B}$ with probability at least $1 - \delta/4$. Also, for a fixed $j\in[l]$ \cref{lem:onePointProj} yields $\mb{P} \lbrb{\norm{\Pi_j}_F \leq \theta } \geq 0.975$. Thus, by a Chernoff bound, with probability at least $1-\exp(-\Omega(l)) = 1-\delta/4$, at least $0.95l$ of the $\Pi_j$ have $\|\Pi_j\|_F \le \theta$. Thus by a union bound:
\begin{equation*}
    \forall v \in \mc{B}: \sum_{j = 1}^l \bm{1} \lbrb{(1 - \eps / 2)\medp \leq \median(\abs{\Pi_j v}) \leq (1 + \eps / 2)\medp \cap \norm{\Pi_j}_F \leq \theta} \geq 0.9l.
\end{equation*}

    



      
  with probability at least $1 - \delta / 2$. We condition on this event and now, extend from $\mathcal B$ to the whole $\ell_p$ ball. Consider any $\norm{x}_p = 1$. From the definition of $\mc{B}$, there exists $v \in \mc{B}$ such that $\twonorm{x - v} \leq \gamma$. Let $\mc{J}$ be defined as:
    \begin{equation*}
        \mc{J} = \lbrb{j: (1 - \eps / 2)\medp \leq \median(\abs{\Pi_j v}) \leq (1 + \eps / 2)\medp \text{ and } \norm{\Pi_j}_F \leq \theta}.
    \end{equation*}

    From the previous discussion, we have $\abs{\mc{J}} \geq 0.9l$. For $j \in \mc{J}$:
    \begin{equation*}
        \infnorm{\Pi_j x - \Pi_j v} \leq \twonorm{\Pi_j x - \Pi_j v} = \twonorm{\Pi_j (x - v)} \leq \norm{\Pi_j}_F\twonorm{x - v} \leq \frac{\eps}{2} \medp
    \end{equation*}

    from our definition of $\gamma$ and the bound on $\norm{\Pi_j}_F$. Therefore, we have: 
    \begin{equation*}
      \abs{\median(\abs{\Pi x}) - \median(\abs{\Pi v})} \leq \frac{\eps}{2} \medp.
    \end{equation*}   
    
    From this, we may conclude that for all $j \in \mc{J}$:
    \begin{equation*}
        (1 - \eps)\medp \leq \median(\abs{\Pi_j x}) \leq (1 + \eps)\medp.
    \end{equation*}

    Since $x$ is an arbitrary vector in $\mb{S}_p^d$ and $\abs{\mc{J}} \geq 0.9l$, the statement of the lemma follows. 
\end{proof}




We prove the correctness of \cref{alg:procQueryP} assuming that $\{\Pi_j\}_{j = 1}^l$ are $(\eps,p)$-representative. 

\begin{lemma}
  \label{lem:procCorr}
  Let $0 < \eps$ and $\delta \in (0, 1)$. Then, \cref{alg:procQueryP} when given as input any query point $q \in \mb{R}^d$ , $\mc{D} = \{\Pi_j, \{\Pi_j x_i\}_{i = 1}^n\}_{j = 1}^l$ where $\{\Pi_j\}_{j = 1}^l$ are $(\eps, p)$-representative, $\eps$ and $\delta$, outputs distance estimates $\{\tilde{d}_{i}\}_{i = 1}^n$ satisfying:
  \begin{equation*}
    \forall i \in [n]: (1 - \eps) \norm{q - x_i}_p \leq \tilde{d}_i \leq (1 + \eps) \norm{q - x_i}_p
  \end{equation*}
  with probability at least $1 - \delta / 2$.
\end{lemma}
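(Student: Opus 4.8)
The plan is to reduce the multi-point statement to a single fixed index $i\in[n]$ and exploit the positive homogeneity of the $p$-stable estimator. First I would fix $i$; if $q=x_i$ then $\Pi_j(q-x_i)=0$ for every $j$, so every $\tilde d_{i,k}=0$ and $\tilde d_i=0$, which trivially satisfies $(1-\eps)\norm{q-x_i}_p\le\tilde d_i\le(1+\eps)\norm{q-x_i}_p$. Otherwise set $x=(q-x_i)/\norm{q-x_i}_p$, so $\norm{x}_p=1$. Since $\Pi_j(q-x_i)=\norm{q-x_i}_p\cdot\Pi_j x$ entrywise, we get $\median(\abs{\Pi_j(q-x_i)})=\norm{q-x_i}_p\cdot\median(\abs{\Pi_j x})$, hence each single-sketch estimate computed in \cref{alg:procQueryP} is $\tilde d_{i,k}=\frac{\norm{q-x_i}_p}{\medp}\,\median(\abs{\Pi_{j_k}x})$. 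By the assumed $(\eps,p)$-representativeness of $\{\Pi_j\}_{j=1}^l$ applied to $x$, the set $\mc{J}_i=\lbrb{j\in[l]:(1-\eps)\medp\le\median(\abs{\Pi_j x})\le(1+\eps)\medp}$ satisfies $\abs{\mc{J}_i}\ge 0.9l$, and for every $j\in\mc{J}_i$ the value $\frac{\norm{q-x_i}_p}{\medp}\median(\abs{\Pi_j x})$ lies in $[(1-\eps)\norm{q-x_i}_p,(1+\eps)\norm{q-x_i}_p]$.

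Next I would analyze the random sample $j_1,\dots,j_r$ drawn iid uniformly with replacement from $[l]$, with $r=\Theta(\log(n/\delta))$. For the fixed $i$, each $j_k$ lands in $\mc{J}_i$ independently with probability $\abs{\mc{J}_i}/l\ge 0.9$, so a Chernoff bound shows that, except with probability at most $\delta/(2n)$ for a suitable choice of the constant hidden in $r$, strictly more than $r/2$ of the indices $j_1,\dots,j_r$ lie in $\mc{J}_i$. On that event strictly more than half of the numbers $\tilde d_{i,1},\dots,\tilde d_{i,r}$ lie in the interval $[(1-\eps)\norm{q-x_i}_p,(1+\eps)\norm{q-x_i}_p]$, and therefore so does their median $\tilde d_i$: if a strict majority of a finite multiset is $\ge a$ then its median is $\ge a$, and if a strict majority is $\le b$ then its median is $\le b$. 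This is exactly the desired two-sided bound for index $i$.

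Finally I would union bound over $i\in[n]$. The only randomness used by \cref{alg:procQueryP} is the choice of $j_1,\dots,j_r$ (the sketches $\Pi_j$ are fixed by the conditioning in the hypothesis), and each of the $n$ ``bad'' events—the median for index $i$ falling outside the target interval—has probability at most $\delta/(2n)$. Hence their union has probability at most $\delta/2$, and off this union all $n$ estimates are simultaneously $(1\pm\eps)$-accurate, which proves the lemma.

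The hard part is essentially bookkeeping rather than a genuine obstacle: the same sample $j_1,\dots,j_r$ is reused for every $i$, so the per-index success events are correlated; but since we only ever need a union bound over the failure events this correlation is harmless, as each individual failure probability is controlled purely by the Chernoff bound over the (internally independent) sampling. The only other point needing a line of care is the elementary fact, used in the aggregation step, that the median of a finite list inherits membership in any interval that contains a strict majority of the entries.
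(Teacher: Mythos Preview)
Your proposal is correct and follows essentially the same argument as the paper: both use the scale invariance of the $p$-stable median estimator to reduce to the $(\eps,p)$-representative guarantee, apply a Chernoff/Hoeffding bound to show a majority of the sampled sketches give accurate estimates for each fixed $i$ with failure probability at most $\delta/(2n)$, conclude via the median, and finish with a union bound over $i\in[n]$. Your version is slightly more explicit (handling $q=x_i$ separately and spelling out the homogeneity and the median-of-majority fact), but the structure is identical.
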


\begin{proof}
  Let $i \in [n]$ and $W_{k} = \bm{1} \{\tilde{d}_{i,k} \in [(1 - \eps) \norm{q - x_i}_p, (1 + \eps) \norm{q - x_i}_p]\}$. We have from the fact that $\{\Pi_j\}_{j = 1}^l$ are $(\eps, p)$-representative and the scale invariance of \cref{def:eprepresentative} that $\mb{E}[W_k] \geq 0.9$. Furthermore, $W_k$ are independent for distinct $k$. Therefore, we have by \cref{thm:hoeff} that with probability at least $1 - \delta / (2n)$, $\sum_{k = 1}^r W_k \geq 0.6r$. From the definition of $\tilde{d}_i$, $\tilde{d}_i$ satisfies the desired accuracy requirements when $\sum_{k = 1}^r W_k \geq 0.6r$ and hence, with probability at least $1 - \delta / (2n)$. By a union bound over all $i \in [n]$, the conclusion of the lemma follows.
\end{proof}

Finally, we analyze the runtimes of \cref{alg:preProcessP,alg:procQueryP} where $\mathsf{MM}(a,b,c)$ is the runtime to multiply an $a\times b$ matrix with a $b\times c$ matrix. Note $\mathsf{MM}(a,b,c) = O(abc)$, but is in fact lower due to the existence of fast rectangular matrix multiplication algorithms \cite{GallU18}; since the precise bound depends on a case analysis of the relationship between $a$, $b$, and $c$, we do not simplify the bound beyond simply stating ``$\mathsf{MM}(a,b,c)$'' since it is orthogonal to our focus.

\begin{lemma}\label{lem:lp-complexity}
  The query time of \cref{alg:procQueryP} is $\tilde O((n+d)\log(1/\delta)/\eps^2)$, and for \cref{alg:preProcessP} the space is $\tilde O((n+d)d\log(1/\delta)/\eps^2)$ and  pre-processing time is $O(\mathsf{MM}(\eps^{-2}(d+\log(1/\delta))\log(d/\eps), d, n))$ (which is naively $\tilde O(nd(d+\log(1/\delta))/\eps^2)$).
\end{lemma}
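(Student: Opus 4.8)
The plan is a direct accounting of the operations in \cref{alg:preProcessP,alg:procQueryP}, followed by substitution of the parameter choices $l=\Theta\lprp{(d+\log 1/\delta)\log(d/\eps)}$, $m=\Theta(1/\eps^2)$, and $r=\Theta(\log(n/\delta))$ made by those algorithms (all counts are in the unit-cost word model used throughout). The data structure $\mc{D}$ consists of the $l$ matrices $\Pi_j\in\mb{R}^{m\times d}$ together with the $ln$ sketched points $\Pi_jx_i\in\mb{R}^m$, so the space is $O\lprp{lmd+lmn}=O\lprp{lm(n+d)}$ words; substituting $lm=\tilde O\lprp{\eps^{-2}(d+\log 1/\delta)}$ gives $\tilde O\lprp{\eps^{-2}(n+d)(d+\log 1/\delta)}$, which is the stated bound $\tilde O\lprp{\eps^{-2}(n+d)d\log(1/\delta)}$ (using $d+\log 1/\delta\le d\log 1/\delta$).

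For the query time of \cref{alg:procQueryP}: drawing $j_1,\dots,j_r$ uniformly with replacement from $[l]$ costs $\tilde O(r)$; for each $k\in[r]$ we form $\Pi_{j_k}q$ with a single $m\times d$ matrix--vector product in $O(md)$ time, and then for each $i\in[n]$ compute $\Pi_{j_k}(q-x_i)=\Pi_{j_k}q-\Pi_{j_k}x_i$ in $O(m)$ time (using the stored $\Pi_{j_k}x_i$) and $\median(\abs{\Pi_{j_k}(q-x_i)})$ in $O(m)$ time by linear-time selection, for $O(m(n+d))$ per $k$. Computing the $n$ outer medians $\tilde d_i=\median(\tilde d_{i,1},\dots,\tilde d_{i,r})$ adds $O(nr)$. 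The total is $O\lprp{rm(n+d)}=\tilde O\lprp{\eps^{-2}(n+d)\log(1/\delta)}$.

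For the pre-processing time of \cref{alg:preProcessP}: drawing the $lmd$ iid entries from $\stable(p)$ costs $O(1)$ each \cite{cms}, i.e.\ $O(lmd)$ in total; and computing all sketches $\{\Pi_jx_i\}_{i\in[n],j\in[l]}$ can be done by stacking $\Pi_1,\dots,\Pi_l$ into one $lm\times d$ matrix and multiplying on the right by $X=[x_1\mid\dots\mid x_n]\in\mb{R}^{d\times n}$, in $\mathsf{MM}(lm,d,n)$ time. Since $\mathsf{MM}(lm,d,n)\geq lmd$ just to read the left factor, the sampling cost is subsumed, and substituting $lm=\Theta\lprp{\eps^{-2}(d+\log 1/\delta)\log(d/\eps)}$ yields pre-processing time $O\lprp{\mathsf{MM}\lprp{\eps^{-2}(d+\log 1/\delta)\log(d/\eps),\,d,\,n}}$, which by $\mathsf{MM}(a,b,c)=O(abc)$ is $\tilde O\lprp{nd(d+\log 1/\delta)/\eps^2}$.

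No step here presents a genuine obstacle; the only points requiring care are (i) expressing the batch of sketch computations as one rectangular matrix product so that the bound is stated through $\mathsf{MM}(\cdot,\cdot,\cdot)$ and can profit from fast rectangular multiplication \cite{GallU18} rather than the naive $O(lmdn)$; (ii) recalling that a median of $m$ reals is computable in $O(m)$, not $O(m\log m)$, time, so each of the $rn$ distance estimates in \cref{alg:procQueryP} costs $O(m)$; and (iii) checking that the $\tilde O(\cdot)$ notation (hiding $\log(nd/\eps)$ factors) absorbs the $\log(d/\eps)$ factor in $l$, the $\log(n/\delta)$ in $r$, and any polylogarithmic overhead in sampling from $\stable(p)$, while the explicit $\log(1/\delta)$ dependence is retained.
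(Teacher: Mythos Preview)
Your proof is correct and follows essentially the same approach as the paper: a direct line-by-line accounting of the operations in \cref{alg:preProcessP,alg:procQueryP}, organizing the batch of sketches as one rectangular product to get the $\mathsf{MM}(ml,d,n)$ bound. If anything you are more careful than the paper, which simply says ``for our settings of $m,l$, the space complexity of the algorithms follows'' without spelling out the median costs or the $O(nr)$ outer-median step.
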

\begin{proof}
  The space required to store the matrices $\{\Pi_j\}_{j = 1}^l$ is $O(mld)$ and the space required to store the projections $\Pi_j x_i$ for all $i \in [n], j \in [l]$ is $O(nml)$. For our settings of $m,l$, the space complexity of the algorithms follows. The query time follows from the time required to compute $\Pi_{j_k} q$ for $k \in [r]$ with $r = O(\log n / \delta)$, the $n$ median computations in \cref{alg:procQueryP} and our setting of $m$. For the pre-processing time, it takes $O(mdl) = \tilde O(\eps^{-2}d(d + \log(1/\delta)))$ time to generate all the $\Pi_j$. Then we have to multiply $\Pi_j x_i$ for all $j\in[l], i\in [n]$. Naively this would take time $O(nlmd) = \tilde O(\eps^{-2}nd(d+\log(1/\delta)))$. This can be improved though using fast matrix multiplication. If we organize the $x_i$ as columns of a $d\times n$ matrix $A$, and stack the $\Pi_j$ row-wise to form a matrix $\Pi\in\mb{R}^{ml\times d}$, then we wish to compute $\Pi A$, which we can do in $\mathsf{MM}(ml, d, n)$ time.
\end{proof}

\begin{remark}
  In the case $p=2$ one can instead use the \textsf{CountSketch} instead of Indyk's $p$-stable sketch, which supports multiplying $\Pi x$ in $O(d)$ time instead of $O(d/\eps^2)$ \cite{CharikarCF04,ThorupZ12}. Thus one could improve the ADE query time in Euclidean space to $\tilde O(d + n/\eps^2)$, i.e.\ the $1/\eps^2$ term need not multiply $d$. Since for the \textsf{CountSketch} matrix, one has $\norm{\Pi}_F \le \sqrt d$ with probability $1$, the same argument as above allows one to establish $(\eps,2)$-representativeness for \textsf{CountSketch} matrices as well. It may also be possible to improve query time similarly for $0<p<2$ using \cite{KaneNPW11}, though we do not do so in the present work.
\end{remark}

We now assemble our results to prove \cref{thm:mainThm}. The proof of \cref{thm:mainThm} follows by using \cref{alg:preProcessP} to construct our adaptive data structure, $\mc{D}$, and \cref{alg:procQueryP} to answer any query, $q$. The correctness guarantees follow from \cref{lem:dsUnif,lem:procCorr} and the runtime and space complexity guarantees follow from \cref{lem:lp-complexity}. This concludes the proof of the theorem.

\qed

\section{Experimental Evaluation}
\label{sec:exp}

In this section, we provide empirical evidence of the efficacy of our scheme. We have implemented both the vanilla Johnson-Lindenstrauss (JL) approach to distance estimation and our own along with an attack designed to compromise the correctness of the JL approach. Recall that in the JL approach, one first selects a matrix $\Pi \in \mb{R}^{k \times d}$ with $k = \Theta(\eps^{-2} \log n)$ whose entries have been drawn from a sub-gaussian distribution with variance $1 / k$. Given a query point $q$, the distance to $x_i$ is approximated by computing $\norm{\Pi (q - x_i)}$. We now describe our evaluation setup starting with the description of the attack.

\paragraph{Our Attack:} The attack we describe can be carried out for any database of at least two points; for the sake of simplicity, we describe our attack applied to the database of three points $\lbrb{-e_1, 0, e_1}$ where $e_1$ is the 1\textsuperscript{st} standard basis vector. Now, consider the set $S$ defined as follows: 
\begin{equation*}
    S \coloneqq \{x: \norm{\Pi (x + e_1)} \leq \norm{\Pi (x - e_1)}\} = \{x: \inp{x}{\Pi \Pi^\top e_1} \leq 0\}.
\end{equation*} 
When $\Pi$ is drawn from say a gaussian distribution as in the JL-approach, the vector $y = \Pi^\top \Pi e_1$, with high probability, has length $\Omega(\sqrt{d / k})$ while $y_1 \approx 1$. Therefore, when $k \ll d$, the overlap of $y$ with $e_1$ is small (that is, $\inp{y}{e_1} / \norm{y}$ is small). Conditioned on this high probability event, we sample a sequence of iid random vectors $\lbrb{z_i}_{i = 1}^r \thicksim \mc{N}(0, I)$ and compute $z \in \mb{R}^d$ defined as:
\begin{equation}
    \label{eq:adv_def}
    z \coloneqq \sum_{i = 1}^r (-1)^{W_i} z_i \text{ where } W_i = \bm{1} \lbrb{\norm{\Pi (z_i - e_1)} \leq \norm{\Pi (z_i + e_1)}}.
\end{equation}
Through simple concentration arguments, $z$ can be shown to be a good approximation of $y$ (in terms of angular distance) and noticing that $\norm{\Pi y}$ is $\Omega(d / k)$, we get that $\norm{\Pi z} \geq \Omega(\sqrt{d / k}) \norm{z}$ so that $z$ makes a good adversarial query. Note that the above attack can be implemented solely with access to two points from the dataset and the values $\norm{\Pi (q - e_1)}$ and $\norm{\Pi (q + e_1)}$. Perhaps even more distressingly, the attack consists of a series of \emph{random} inputs and concludes with a \emph{single} adaptive choice. That is, the JL approach to distance estimation can be broken with a \emph{single} round of adaptivity. 

In Figure~\ref{fig:exp}, we illustrate the results of our attack on the JL sketch as well as an implementation of our algorithm when $d = 5000$ and $k = 250$ (for computational reasons, we chose a much smaller value of $l = 200$ to implement our data structure). To observe how the performance of the JL approach degrades with the number of rounds, we plotted the reported length of $z$ as in \cref{eq:adv_def} for $r$ ranging from $1$ to $5000$. Furthermore, we compare this to the results that one would obtain if the inputs to the sketch were random points in $\mb{R}^d$ as opposed to adversarial ones. From Subfigure~\ref{subfig:jl_adv}, the performance of the JL approach drops drastically as as soon as a few hundred random queries are made which is significantly smaller than the ambient dimension. In contrast, Subfigure~\ref{subfig:ade_adv} shows that our ADE data structure is unaffected by the previously described attack corroborating our theoretical analysis. 

\begin{figure*}
    \centering
    \subfloat[Johnson-Lindenstrauss Adaptive vs Random\label{subfig:jl_adv}]{\includegraphics[width=0.45\textwidth]{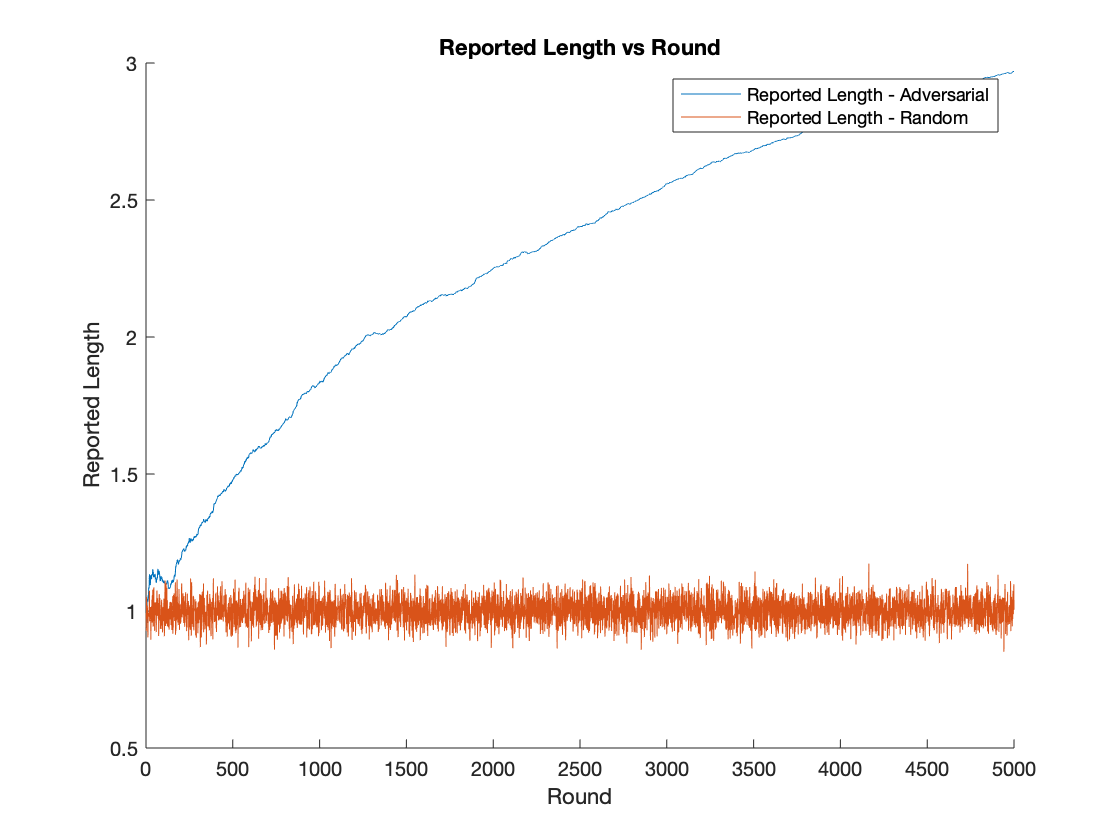}}
    \subfloat[ADE with Adaptive Inputs\label{subfig:ade_adv}] {\includegraphics[width=0.45\textwidth]{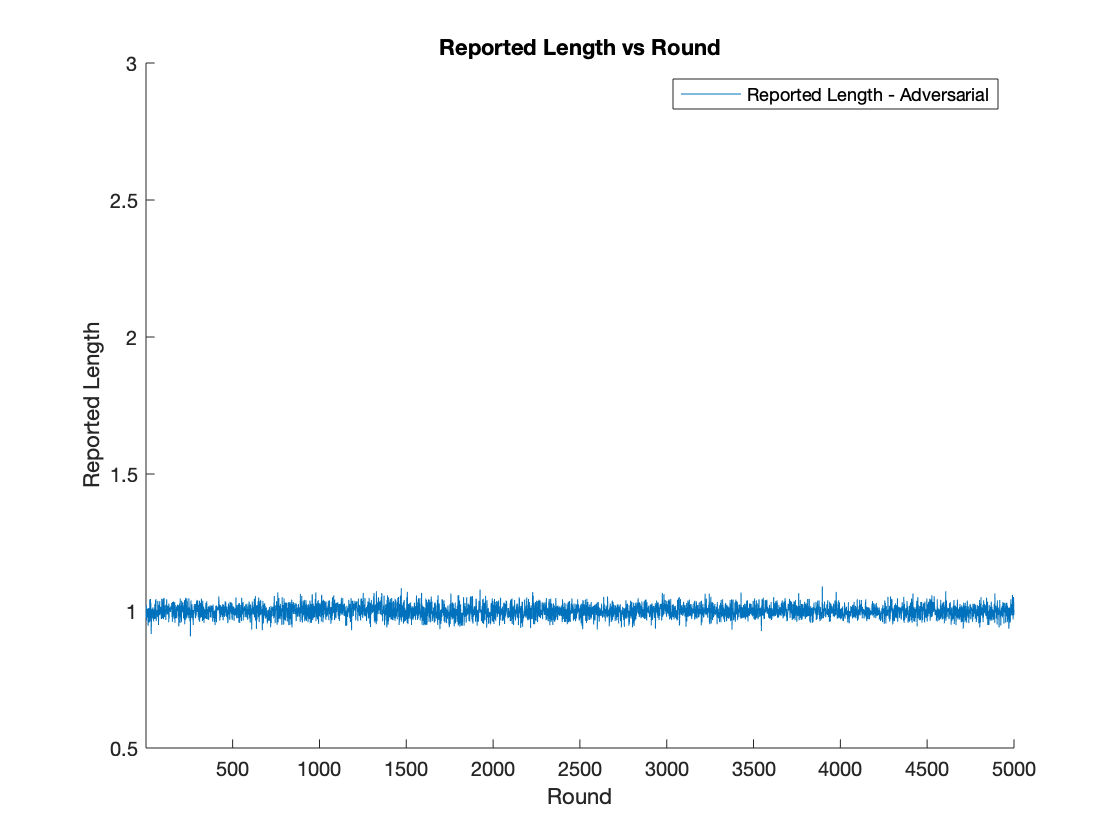}}
    \caption{Subfigure~\ref{subfig:jl_adv} illustrates the impact of adaptivity on the performance of the JL approach to distance estimation and contrasts its performance to what one would obtain if the inputs were random. In contrast, Subfigure~\ref{subfig:ade_adv} shows that the ADE approach described in this paper is unaffected by the attack described here.} \label{fig:exp}
  \end{figure*}
  
\section{Conclusion}
\label{sec:conc}

In this paper, we studied the problem of adaptive distance estimation where one is required to estimate the distance between a sequence of possibly adversarially chosen query points and the points in a dataset. For the aforementioned problem, we devised algorithms for all $\ell_p$ norms with $0 < p \leq 2$ with nearly optimal query times and whose space complexities are nearly optimal. Prior to our work, the only previous result with comparable guarantees is an algorithm for the Euclidean case which only returns \emph{one} near neighbor \cite{kl97} and does not estimate all distances. Along the way, we devised a novel framework for building adaptive data structures from non-adaptive ones and leveraged recent results from heavy tailed estimation for one analysis. We now present some open questions:

\begin{enumerate}
    \item Our construction can be more broadly viewed as a specific instance of \emph{ensemble learning} \cite{ensemble}. Starting with the influential work of \cite{baggingpred,randomForests}, ensemble methods have been a mainstay in practical machine learning techniques. Indeed, the matrices stored in our ensemble have $O(\eps^{-2})$ rows while using a single large matrix would require a model with $O(d)$ rows. Are there other machine learning tasks for which such trade-offs can be quantified?
    \item The main drawback our results is the time taken to compute the data structure, $O(nd^2)$ (this could be improved using fast rectangular matrix multiplication, but would still be $\omega(nd)$, i.e.\ superlinear). One question is thus whether nearly linear time pre-processing is possible. 
\end{enumerate}

\section{Acknowledgements}
\label{sec:acks}
The authors would like to thank Sam Hopkins, Sidhanth Mohanty, Nilesh Tripuraneni and Tijana Zrnic for helpful comments in the course of this project and in the preparation of this manuscript.
\bibliographystyle{alpha}
\bibliography{adaptive}

\newcommand{\etalchar}[1]{$^{#1}$}
\begin{thebibliography}{BEJWY20}

\bibitem[Ahl17]{ahle}
Thomas~Dybdahl Ahle.
\newblock Optimal {Las} {Vegas} locality sensitive data structures.
\newblock In Chris Umans, editor, {\em 58th {IEEE} Annual Symposium on
  Foundations of Computer Science, {FOCS} 2017, Berkeley, CA, USA, October
  15-17, 2017}, pages 938--949. {IEEE} Computer Society, 2017.

\bibitem[AK17]{AlonK17}
Noga Alon and Bo'az Klartag.
\newblock Optimal compression of approximate inner products and dimension
  reduction.
\newblock In {\em Proceedings of the 58th {IEEE} Annual Symposium on
  Foundations of Computer Science (FOCS)}, pages 639--650, 2017.

\bibitem[Alt92]{introknnnr}
N.~S. Altman.
\newblock An introduction to kernel and nearest-neighbor nonparametric
  regression.
\newblock {\em Amer. Statist.}, 46(3):175--185, 1992.

\bibitem[AMS97]{localwl}
Christopher~G. Atkeson, Andrew~W. Moore, and Stefan Schaal.
\newblock Locally weighted learning.
\newblock {\em Artif. Intell. Rev.}, 11(1-5):11--73, 1997.

\bibitem[AMS99]{AlonMS99}
Noga Alon, Yossi Matias, and Mario Szegedy.
\newblock The space complexity of approximating the frequency moments.
\newblock {\em J. Comput. Syst. Sci.}, 58(1):137--147, 1999.

\bibitem[BCM{\etalchar{+}}17]{evasionAttacks}
Battista Biggio, Igino Corona, Davide Maiorca, Blaine Nelson, Nedim Srndic,
  Pavel Laskov, Giorgio Giacinto, and Fabio Roli.
\newblock Evasion attacks against machine learning at test time.
\newblock {\em CoRR}, abs/1708.06131, 2017.

\bibitem[BEJWY20]{BenEliezerJWY20}
Omri Ben-Eliezer, Rajesh Jayaram, David~P. Woodruff, and Eylon Yogev.
\newblock A framework for adversarially robust streaming algorithms.
\newblock In {\em Proceedings of the 39th ACM SIGMOD-SIGACT-SIGART Symposium on
  Principles of Database Systems (PODS)}, 2020.

\bibitem[BLM13]{blm}
St\'{e}phane Boucheron, G\'{a}bor Lugosi, and Pascal Massart.
\newblock {\em Concentration inequalities}.
\newblock Oxford University Press, Oxford, 2013.
\newblock A nonasymptotic theory of independence, With a foreword by Michel
  Ledoux.

\bibitem[BM01]{bingham}
Ella Bingham and Heikki Mannila.
\newblock Random projection in dimensionality reduction: applications to image
  and text data.
\newblock In Doheon Lee, Mario Schkolnick, Foster~J. Provost, and Ramakrishnan
  Srikant, editors, {\em Proceedings of the seventh {ACM} {SIGKDD}
  international conference on Knowledge discovery and data mining, San
  Francisco, CA, USA, August 26-29, 2001}, pages 245--250. {ACM}, 2001.

\bibitem[BNS{\etalchar{+}}16]{algStabAdaptive}
Raef Bassily, Kobbi Nissim, Adam~D. Smith, Thomas Steinke, Uri Stemmer, and
  Jonathan Ullman.
\newblock Algorithmic stability for adaptive data analysis.
\newblock In Daniel Wichs and Yishay Mansour, editors, {\em Proceedings of the
  48th Annual {ACM} {SIGACT} Symposium on Theory of Computing, {STOC} 2016,
  Cambridge, MA, USA, June 18-21, 2016}, pages 1046--1059. {ACM}, 2016.

\bibitem[Bre96]{baggingpred}
Leo Breiman.
\newblock Bagging predictors.
\newblock {\em Mach. Learn.}, 24(2):123--140, 1996.

\bibitem[Bre01]{randomForests}
Leo Breiman.
\newblock Random forests.
\newblock {\em Mach. Learn.}, 45(1):5--32, 2001.

\bibitem[CBK09]{ChandolaBK09}
Varun Chandola, Arindam Banerjee, and Vipin Kumar.
\newblock Anomaly detection: {A} survey.
\newblock {\em {ACM} Comput. Surv.}, 41(3):15:1--15:58, 2009.

\bibitem[CCF04]{CharikarCF04}
Moses Charikar, Kevin~C. Chen, and Martin Farach{-}Colton.
\newblock Finding frequent items in data streams.
\newblock {\em Theor. Comput. Sci.}, 312(1):3--15, 2004.

\bibitem[Cla88]{cla88}
Kenneth~L. Clarkson.
\newblock A randomized algorithm for closest-point queries.
\newblock {\em {SIAM} J. Comput.}, 17(4):830--847, 1988.

\bibitem[CMS76]{cms}
J.~M. Chambers, C.~L. Mallows, and B.~W. Stuck.
\newblock A method for simulating stable random variables.
\newblock {\em J. Amer. Statist. Assoc.}, 71(354):340--344, 1976.

\bibitem[DFH{\etalchar{+}}15a]{genAdapHoldout}
Cynthia Dwork, Vitaly Feldman, Moritz Hardt, Toniann Pitassi, Omer Reingold,
  and Aaron Roth.
\newblock Generalization in adaptive data analysis and holdout reuse.
\newblock In {\em Proceedings of the 28th Annual Conference on Advances in
  Neural Information Processing Systems (NIPS)}, pages 2350--2358, 2015.

\bibitem[DFH{\etalchar{+}}15b]{reusableHoldout}
Cynthia Dwork, Vitaly Feldman, Moritz Hardt, Toniann Pitassi, Omer Reingold,
  and Aaron Roth.
\newblock The reusable holdout: preserving validity in adaptive data analysis.
\newblock {\em Science}, 349(6248):636--638, 2015.

\bibitem[DFH{\etalchar{+}}15c]{preservingAdaptive}
Cynthia Dwork, Vitaly Feldman, Moritz Hardt, Toniann Pitassi, Omer Reingold,
  and Aaron~Leon Roth.
\newblock Preserving statistical validity in adaptive data analysis.
\newblock In {\em Proceedings of the 47th Annual {ACM} on Symposium on Theory
  of Computing (STOC)}, pages 117--126, 2015.

\bibitem[Die00]{ensemble}
Thomas~G. Dietterich.
\newblock Ensemble methods in machine learning.
\newblock In Josef Kittler and Fabio Roli, editors, {\em Multiple Classifier
  Systems, First International Workshop, {MCS} 2000, Cagliari, Italy, June
  21-23, 2000, Proceedings}, volume 1857 of {\em Lecture Notes in Computer
  Science}, pages 1--15. Springer, 2000.

\bibitem[DIIM04]{lshpstable}
Mayur Datar, Nicole Immorlica, Piotr Indyk, and Vahab~S. Mirrokni.
\newblock Locality-sensitive hashing scheme based on p-stable distributions.
\newblock In Jack Snoeyink and Jean{-}Daniel Boissonnat, editors, {\em
  Proceedings of the 20th {ACM} Symposium on Computational Geometry, Brooklyn,
  New York, USA, June 8-11, 2004}, pages 253--262. {ACM}, 2004.

\bibitem[DSSU17]{surveyPrivateData}
Cynthia Dwork, Adam Smith, Thomas Steinke, and Jonathan Ullman.
\newblock Exposed! a survey of attacks on private data.
\newblock {\em Annual Review of Statistics and Its Application}, 4(1):61--84,
  2017.

\bibitem[GHR{\etalchar{+}}12]{ghr12}
Anna~C. Gilbert, Brett Hemenway, Atri Rudra, Martin~J. Strauss, and Mary
  Wootters.
\newblock Recovering simple signals.
\newblock In {\em 2012 Information Theory and Applications Workshop, {ITA}
  2012, San Diego, CA, USA, February 5-10, 2012}, pages 382--391. {IEEE}, 2012.

\bibitem[GHS{\etalchar{+}}12]{ghs12}
Anna~C. Gilbert, Brett Hemenway, Martin~J. Strauss, David~P. Woodruff, and Mary
  Wootters.
\newblock Reusable low-error compressive sampling schemes through privacy.
\newblock In {\em {IEEE} Statistical Signal Processing Workshop, {SSP} 2012,
  Ann Arbor, MI, USA, August 5-8, 2012}, pages 536--539. {IEEE}, 2012.

\bibitem[GSS15]{adversarialExamples}
Ian~J. Goodfellow, Jonathon Shlens, and Christian Szegedy.
\newblock Explaining and harnessing adversarial examples.
\newblock In Yoshua Bengio and Yann LeCun, editors, {\em 3rd International
  Conference on Learning Representations, {ICLR} 2015, San Diego, CA, USA, May
  7-9, 2015, Conference Track Proceedings}, 2015.

\bibitem[GU18]{GallU18}
Francois~Le Gall and Florent Urrutia.
\newblock Improved rectangular matrix multiplication using powers of the
  {Coppersmith}-{Winograd} tensor.
\newblock In {\em Proceedings of the 29th Annual {ACM-SIAM} Symposium on
  Discrete Algorithms (SODA)}, pages 1029--1046, 2018.

\bibitem[HMPW16]{strategic}
Moritz Hardt, Nimrod Megiddo, Christos~H. Papadimitriou, and Mary Wootters.
\newblock Strategic classification.
\newblock In Madhu Sudan, editor, {\em Proceedings of the 2016 {ACM} Conference
  on Innovations in Theoretical Computer Science, Cambridge, MA, USA, January
  14-16, 2016}, pages 111--122. {ACM}, 2016.

\bibitem[HSS08]{kernml}
Thomas Hofmann, Bernhard Sch\"{o}lkopf, and Alexander~J. Smola.
\newblock Kernel methods in machine learning.
\newblock {\em Ann. Statist.}, 36(3):1171--1220, 2008.

\bibitem[HW13]{hardtW13}
Moritz Hardt and David~P. Woodruff.
\newblock How robust are linear sketches to adaptive inputs?
\newblock In Dan Boneh, Tim Roughgarden, and Joan Feigenbaum, editors, {\em
  Symposium on Theory of Computing Conference, STOC'13, Palo Alto, CA, USA,
  June 1-4, 2013}, pages 121--130. {ACM}, 2013.

\bibitem[IM98]{im98}
Piotr Indyk and Rajeev Motwani.
\newblock Approximate nearest neighbors: Towards removing the curse of
  dimensionality.
\newblock In Jeffrey~Scott Vitter, editor, {\em Proceedings of the Thirtieth
  Annual {ACM} Symposium on the Theory of Computing, Dallas, Texas, USA, May
  23-26, 1998}, pages 604--613. {ACM}, 1998.

\bibitem[Ind06]{Indyk06}
Piotr Indyk.
\newblock Stable distributions, pseudorandom generators, embeddings, and data
  stream computation.
\newblock {\em J. {ACM}}, 53(3):307--323, 2006.

\bibitem[IW18]{IndykW18}
Piotr Indyk and Tal Wagner.
\newblock Approximate nearest neighbors in limited space.
\newblock In {\em Proceedings of the Conference On Learning Theory (COLT)},
  pages 2012--2036, 2018.

\bibitem[JL84]{jl}
William~B. Johnson and Joram Lindenstrauss.
\newblock Extensions of {L}ipschitz mappings into a {H}ilbert space.
\newblock In {\em Conference in modern analysis and probability ({N}ew {H}aven,
  {C}onn., 1982)}, volume~26 of {\em Contemp. Math.}, pages 189--206. Amer.
  Math. Soc., Providence, RI, 1984.

\bibitem[Kle97]{kl97}
Jon~M. Kleinberg.
\newblock Two algorithms for nearest-neighbor search in high dimensions.
\newblock In Frank~Thomson Leighton and Peter~W. Shor, editors, {\em
  Proceedings of the Twenty-Ninth Annual {ACM} Symposium on the Theory of
  Computing, El Paso, Texas, USA, May 4-6, 1997}, pages 599--608. {ACM}, 1997.

\bibitem[KNPW11]{KaneNPW11}
Daniel~M. Kane, Jelani Nelson, Ely Porat, and David~P. Woodruff.
\newblock Fast moment estimation in data streams in optimal space.
\newblock In {\em Proceedings of the 43rd {ACM} Symposium on Theory of
  Computing (STOC)}, pages 745--754, 2011.

\bibitem[KNW10]{KaneNW10}
Daniel~M. Kane, Jelani Nelson, and David~P. Woodruff.
\newblock On the exact space complexity of sketching and streaming small norms.
\newblock In {\em Proceedings of the 21st Annual ACM-SIAM Symposium on Discrete
  Algorithms (SODA)}, pages 1161--1178, 2010.

\bibitem[KOR00]{Kushilevitz}
Eyal Kushilevitz, Rafail Ostrovsky, and Yuval Rabani.
\newblock Efficient search for approximate nearest neighbor in high dimensional
  spaces.
\newblock {\em {SIAM} J. Comput.}, 30(2):457--474, 2000.

\bibitem[LCLS17]{delvingTransAdv}
Yanpei Liu, Xinyun Chen, Chang Liu, and Dawn Song.
\newblock Delving into transferable adversarial examples and black-box attacks.
\newblock In {\em 5th International Conference on Learning Representations,
  {ICLR} 2017, Toulon, France, April 24-26, 2017, Conference Track
  Proceedings}. OpenReview.net, 2017.

\bibitem[LM19]{meanGeneralNorms}
G\'{a}bor Lugosi and Shahar Mendelson.
\newblock Near-optimal mean estimators with respect to general norms.
\newblock {\em Probab. Theory Related Fields}, 175(3-4):957--973, 2019.

\bibitem[LT11]{ledtal}
Michel Ledoux and Michel Talagrand.
\newblock {\em Probability in {B}anach spaces}.
\newblock Classics in Mathematics. Springer-Verlag, Berlin, 2011.
\newblock Isoperimetry and processes, Reprint of the 1991 edition.

\bibitem[Mei93]{mei93}
S.~Meiser.
\newblock Point location in arrangements of hyperplanes.
\newblock {\em Inform. and Comput.}, 106(2):286--303, 1993.

\bibitem[MNS11]{mns}
Ilya Mironov, Moni Naor, and Gil Segev.
\newblock Sketching in adversarial environments.
\newblock {\em {SIAM} J. Comput.}, 40(6):1845--1870, 2011.

\bibitem[MZ18]{mzcovariance}
Shahar Mendelson and Nikitz Zhivotovskiy.
\newblock Robust covariance estimation under {$L_4$}-{$L_2$} norm equivalence.
\newblock {\em arXiv preprint arXiv:1809.10462}, 2018.

\bibitem[Nol18]{nolan}
J.~P. Nolan.
\newblock {\em Stable Distributions - Models for Heavy Tailed Data}.
\newblock Birkhauser, Boston, 2018.
\newblock In progress, Chapter 1 online at
  http://fs2.american.edu/jpnolan/www/stable/stable.html.

\bibitem[Pag18]{pag}
Rasmus Pagh.
\newblock Coveringlsh: Locality-sensitive hashing without false negatives.
\newblock {\em {ACM} Trans. Algorithms}, 14(3):29:1--29:17, 2018.

\bibitem[PMG16]{transfer}
Nicolas Papernot, Patrick~D. McDaniel, and Ian~J. Goodfellow.
\newblock Transferability in machine learning: from phenomena to black-box
  attacks using adversarial samples.
\newblock {\em CoRR}, abs/1605.07277, 2016.

\bibitem[PMG{\etalchar{+}}17]{pracblackbox}
Nicolas Papernot, Patrick~D. McDaniel, Ian~J. Goodfellow, Somesh Jha, Z.~Berkay
  Celik, and Ananthram Swami.
\newblock Practical black-box attacks against machine learning.
\newblock In Ramesh Karri, Ozgur Sinanoglu, Ahmad{-}Reza Sadeghi, and Xun Yi,
  editors, {\em Proceedings of the 2017 {ACM} on Asia Conference on Computer
  and Communications Security, AsiaCCS 2017, Abu Dhabi, United Arab Emirates,
  April 2-6, 2017}, pages 506--519. {ACM}, 2017.

\bibitem[SDI08]{nnvision}
Gregory Shakhnarovich, Trevor Darrell, and Piotr Indyk.
\newblock Nearest-neighbor methods in learning and vision.
\newblock {\em {IEEE} Trans. Neural Networks}, 19(2):377, 2008.

\bibitem[Sim96]{smoothing}
Jeffrey~S. Simonoff.
\newblock {\em Smoothing methods in statistics}.
\newblock Springer Series in Statistics. Springer-Verlag, New York, 1996.

\bibitem[SW17]{sw}
Piotr Sankowski and Piotr Wygocki.
\newblock Approximate nearest neighbors search without false negatives for
  $\l_2$ for $c>\sqrt{\log\log n}$.
\newblock In Yoshio Okamoto and Takeshi Tokuyama, editors, {\em 28th
  International Symposium on Algorithms and Computation, {ISAAC} 2017, December
  9-12, 2017, Phuket, Thailand}, volume~92 of {\em LIPIcs}, pages 63:1--63:12.
  Schloss Dagstuhl - Leibniz-Zentrum f{\"{u}}r Informatik, 2017.

\bibitem[TZ12]{ThorupZ12}
Mikkel Thorup and Yin Zhang.
\newblock Tabulation-based 5-independent hashing with applications to linear
  probing and second moment estimation.
\newblock {\em {SIAM} J. Comput.}, 41(2):293--331, 2012.

\bibitem[Ver18]{vershynin}
Roman Vershynin.
\newblock {\em High-dimensional probability}, volume~47 of {\em Cambridge
  Series in Statistical and Probabilistic Mathematics}.
\newblock Cambridge University Press, Cambridge, 2018.
\newblock An introduction with applications in data science, With a foreword by
  Sara van de Geer.

\bibitem[Wei19]{wei}
Alexander Wei.
\newblock Optimal las vegas approximate near neighbors in $\ell_p$.
\newblock In {\em Proceedings of the Thirtieth Annual {ACM-SIAM} Symposium on
  Discrete Algorithms (SODA)}, pages 1794--1813, 2019.

\bibitem[WJ95]{kernelsmoothing}
M.~P. Wand and M.~C. Jones.
\newblock {\em Kernel smoothing}, volume~60 of {\em Monographs on Statistics
  and Applied Probability}.
\newblock Chapman and Hall, Ltd., London, 1995.

\bibitem[YHZL19]{advExamplesAttack}
Xiaoyong Yuan, Pan He, Qile Zhu, and Xiaolin Li.
\newblock Adversarial examples: Attacks and defenses for deep learning.
\newblock {\em {IEEE} Trans. Neural Networks Learn. Syst.}, 30(9):2805--2824,
  2019.

\bibitem[Zol86]{zolotarev}
V.~M. Zolotarev.
\newblock {\em One-dimensional stable distributions}, volume~65 of {\em
  Translations of Mathematical Monographs}.
\newblock American Mathematical Society, Providence, RI, 1986.
\newblock Translated from the Russian by H. H. McFaden, Translation edited by
  Ben Silver.

\end{thebibliography}

\pagebreak
\appendix

\section{Miscellaneous Results and Supporting }

\subsection{Properties of Stable Distributions}
\label{ssec:stabProp}

We will use the following property of stable distributions:

\begin{lemma}{\cite{nolan}}
    \label{lem:stabTails}
    For fixed $0 < p < 2$, the probability density function of a $p$ stable distribution is $\Theta (\abs{x}^{-p - 1})$ for large $\abs{x}$. 
\end{lemma}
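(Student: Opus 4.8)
The plan is to derive \cref{lem:stabTails} (a classical fact; it is precisely what limits the moments of a $p$-stable law to orders below $p$) from Fourier inversion followed by an asymptotic analysis of the resulting oscillatory integral. By \cref{def:stab}, $Z\sim\stable(p)$ has the real, even, integrable characteristic function $e^{-\abs{t}^p}$, so $Z$ has a bounded continuous density
\[
f(x)=\frac{1}{2\pi}\int_{\mb{R}}e^{-itx}e^{-\abs{t}^p}\,dt=\frac{1}{\pi}\int_0^\infty\cos(tx)\,e^{-t^p}\,dt ,
\]
which is even, so it suffices to analyze $x\to+\infty$. Rescaling $t=s/x$ and integrating by parts once — the boundary terms vanish, using $\sin 0=0$ and the super-exponential decay of $e^{-s^p/x^p}$ — yields
\[
f(x)=\frac{1}{\pi x}\int_0^\infty\cos(s)\,e^{-s^p/x^p}\,ds=\frac{p}{\pi\,x^{p+1}}\,J(x),\qquad J(x):=\int_0^\infty\sin(s)\,s^{p-1}e^{-s^p/x^p}\,ds .
\]
Thus the lemma reduces to showing $J(x)\to c$ for a constant $c>0$ as $x\to\infty$; I expect $c=\Gamma(p)\sin(\tfrac{\pi p}{2})$, which is strictly positive for all $0<p<2$, so that $f(x)\sim\frac{\Gamma(p+1)\sin(\pi p/2)}{\pi}\,x^{-p-1}$.

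Establishing $J(x)\to\Gamma(p)\sin(\tfrac{\pi p}{2})$ is the technical heart, and I would split into cases according to the integrability of the limiting integrand. For $0<p<1$, the integrand tends pointwise to $\sin(s)s^{p-1}$, whose improper integral converges to $\Gamma(p)\sin(\tfrac{\pi p}{2})$ (a standard contour rotation gives $\int_0^\infty s^{p-1}e^{is}\,ds=\Gamma(p)e^{i\pi p/2}$; take imaginary parts). Splitting $[0,\infty)=[0,M]\cup[M,\infty)$: on $[0,M]$ dominated convergence applies (dominate by $s^{p-1}\in L^1[0,M]$; the defect from replacing $e^{-s^p/x^p}$ by $1$ is $O(x^{-p})$ for fixed $M$), while on $[M,\infty)$ the weight $s\mapsto s^{p-1}e^{-s^p/x^p}$ is nonincreasing, so the second mean value theorem bounds both $\int_M^\infty\sin(s)s^{p-1}e^{-s^p/x^p}\,ds$ and $\int_M^\infty\sin(s)s^{p-1}\,ds$ by $2M^{p-1}$; letting $x\to\infty$ then $M\to\infty$ finishes this case. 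The case $p=1$ is the Cauchy distribution, $f(x)=\tfrac1{\pi(1+x^2)}=\Theta(x^{-2})$, so it is immediate. For $1<p<2$ the bare limit integral $\int_0^\infty\sin(s)s^{p-1}\,ds$ diverges, so I would integrate by parts once more in $J(x)$ — now legitimate since $s^{p-1}e^{-s^p/x^p}$ vanishes at $s=0$ — obtaining $(p-1)\int_0^\infty\cos(s)s^{p-2}e^{-s^p/x^p}\,ds-\frac{p}{x^p}\int_0^\infty\cos(s)s^{2p-2}e^{-s^p/x^p}\,ds$. The first term has weight exponent $p-2\in(-1,0)$ and is handled exactly as in the $p<1$ case, converging to $(p-1)\int_0^\infty\cos(s)s^{p-2}\,ds=(p-1)\Gamma(p-1)\cos(\tfrac{\pi(p-1)}{2})=\Gamma(p)\sin(\tfrac{\pi p}{2})$. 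The second term, after rescaling $s=xu$ and one further integration by parts (valid since $u^{2p-2}e^{-u^p}$ and its derivative lie in $L^1(0,\infty)$ for $p>1$), is $O(x^{p-2})=o(1)$.

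Assembling the pieces: $x^{p+1}f(x)\to\tfrac{p}{\pi}\Gamma(p)\sin(\tfrac{\pi p}{2})=\tfrac{\Gamma(p+1)\sin(\pi p/2)}{\pi}\in(0,\infty)$, so $f(x)=\Theta(x^{-p-1})$ as $x\to+\infty$ and, by evenness, $f(x)=\Theta(\abs{x}^{-p-1})$ as $\abs{x}\to\infty$. The main obstacle is precisely the uniform-in-$x$ control of the oscillatory integral $J(x)$; the delicate regime is $1<p<2$, where $\sin(s)s^{p-1}$ fails to be even conditionally integrable and the interchange of limit and integral genuinely relies on the extra integration by parts (equivalently, on the Abel-regularized value of the Fourier transform of a power). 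A cleaner but less self-contained alternative would run the whole argument through the Mellin transform of $f$: its meromorphic continuation has its first pole past the fundamental strip at $s=p$, and the residue there yields exactly the $\abs{x}^{-p-1}$ term — but this requires the inverse-Mellin/residue apparatus, so I would prefer the Fourier-integral route above.
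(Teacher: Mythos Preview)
Your proposal is correct, but there is nothing to compare against: the paper does not prove \cref{lem:stabTails} at all. It is stated with a citation to \cite{nolan} and used as a black box (only to derive \cref{cor:stabInt}, the tail bound $\mb{P}\{\abs{Z}\ge t\}=\Theta(t^{-p})$, which in turn feeds into \cref{lem:onePointProj}). So your Fourier-inversion argument is strictly \emph{more} than what the paper provides, supplying a self-contained derivation of a classical fact the paper simply imports.

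The argument itself is sound. The one cosmetic slip is calling the decay of $e^{-s^p/x^p}$ ``super-exponential'': for $0<p<1$ it is stretched-exponential, but all you actually use is that it tends to $0$ at infinity while $\sin$ stays bounded, so the boundary term still vanishes. The case split at $p=1$ is handled correctly; the delicate regime $1<p<2$ is where the extra integration by parts is genuinely needed, and your bookkeeping there checks out: after rescaling $s=xu$ the remainder term is $-p\,x^{p-1}\int_0^\infty\cos(xu)\,u^{2p-2}e^{-u^p}\,du$, and since $u\mapsto u^{2p-2}e^{-u^p}$ and its derivative are both in $L^1(0,\infty)$ for $p>1$, one integration by parts gives the integral as $O(1/x)$, whence the remainder is $O(x^{p-2})=o(1)$. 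The resulting constant $\tfrac{1}{\pi}\Gamma(p+1)\sin(\pi p/2)$ is the standard one and is indeed positive on $(0,2)$.
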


By integrating the tail bound from the previous result, we get the following simple corollary. 

\begin{corollary}
    \label{cor:stabInt}
    For fixed $0 < p < 2$ and $Z \thicksim \stable(p)$ and $t$ large:

    \begin{equation*}
        \mb{P} \lbrb{\abs{Z} \geq t} = \Theta(t^{-p}).
    \end{equation*}
\end{corollary}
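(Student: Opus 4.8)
The plan is to obtain the tail bound by integrating the density estimate from \cref{lem:stabTails}. Write $f_p$ for the probability density function of a $p$-stable random variable $Z \sim \stable(p)$. By \cref{lem:stabTails}, there exist constants $c_1, c_2 > 0$ and a threshold $t_0$ such that $c_1 |x|^{-p-1} \leq f_p(x) \leq c_2 |x|^{-p-1}$ for all $|x| \geq t_0$. First I would fix $t \geq t_0$ and split $\mb{P}\lbrb{|Z| \geq t} = \int_{-\infty}^{-t} f_p(x)\,dx + \int_{t}^{\infty} f_p(x)\,dx$. On each of these two regions the argument satisfies $|x| \geq t \geq t_0$, so the two-sided density estimate applies throughout the range of integration.

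Next I would carry out the elementary integration. For the upper bound, $\int_t^\infty f_p(x)\,dx \leq \int_t^\infty c_2 x^{-p-1}\,dx = \frac{c_2}{p} t^{-p}$, and symmetrically for the region $(-\infty, -t]$, giving $\mb{P}\lbrb{|Z| \geq t} \leq \frac{2c_2}{p} t^{-p}$. For the lower bound, $\int_t^\infty f_p(x)\,dx \geq \int_t^\infty c_1 x^{-p-1}\,dx = \frac{c_1}{p} t^{-p}$, so $\mb{P}\lbrb{|Z| \geq t} \geq \frac{c_1}{p} t^{-p}$. (In fact it suffices to use just one of the two tails for the lower bound.) Combining, $\frac{c_1}{p} t^{-p} \leq \mb{P}\lbrb{|Z| \geq t} \leq \frac{2c_2}{p} t^{-p}$ for all $t \geq t_0$, which is exactly the claim $\mb{P}\lbrb{|Z| \geq t} = \Theta(t^{-p})$ for $t$ large.

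There is essentially no obstacle here — this is a routine ``integrate the density tail'' computation, and the only mild care required is to make sure the density estimate of \cref{lem:stabTails} holds on the \emph{entire} tail region $[t, \infty)$ (and its reflection), which is automatic once $t$ exceeds the threshold beyond which the $\Theta(|x|^{-p-1})$ estimate is valid. The finiteness of the relevant integrals uses only $p > 0$, which holds by hypothesis; the constant $p$ being strictly less than $2$ is what makes the density genuinely heavy-tailed (for $p = 2$ the Gaussian density is not of this polynomial form), but that plays no role in the computation beyond invoking \cref{lem:stabTails}.
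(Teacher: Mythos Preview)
Your proposal is correct and matches the paper's approach exactly: the paper simply remarks that the corollary follows ``by integrating the tail bound from the previous result,'' and your argument is precisely that integration of the $\Theta(|x|^{-p-1})$ density estimate from \cref{lem:stabTails} over $[t,\infty)$ and $(-\infty,-t]$.
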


\subsection{Probability and High-dimensional Concentration Tools}
\label{ssec:ept}

We recall here standard definitions in empirical process theory from \cite{vershynin}.

\begin{definition}[$\eps$-net \cite{vershynin}]
    \label{def:net}
    Let $(T, d)$ be a metric space, $K \subset T$ and $\eps > 0$. Then, a subset $\mathcal{N} \subset K$ is an $\eps$-net of $K$ if very point in $K$ is within a distance of $\eps$ to some point in $\mathcal{N}$. That is:

    \begin{equation*}
        \forall x \in K, \exists y \in \mathcal{N}: d(x, y) \leq \eps.
    \end{equation*}
\end{definition}

From this, we obtain the definition of a covering number:

\begin{definition}[Covering Number \cite{vershynin}]
    \label{def:covering}
    Let $(T, d)$ be a metric space, $K \subset T$ and $\eps > 0$. The smallest possible cardinality of an $\eps$-net of $K$ is called the \emph{covering number} of $K$ and is denoted by $\mc{N}(K, d, \eps)$.
\end{definition}

In the most general set up, we also recall the definition of a covering number.

\begin{definition}[Packing Number \cite{vershynin}]
    \label{def:packing}
    Let $(T, d)$ be a metric space, $K \subset T$ and $\eps > 0$. A subset $\mc{P}$ of $T$ is $\eps$-\emph{separated} if for all $x,y \in \mc{P}$, we have $d(x, y) > \eps$. The largest possible cardinality of an $\eps$-separated set in $K$ is called the packing number of $K$ and is denoted by $\mc{P}(K, d, \eps)$. 
\end{definition}

We finally recall the following simple fact relating packing and covering numbers. 

\begin{lemma}[\cite{vershynin}]
    \label{lem:packcov}
    Let $(T, d)$ be a metric space, $K \subset T$ and $\eps > 0$. Then:

    \begin{equation*}
        \mc{P} (K, d, 2\eps) \leq \mc{N} (K, d, \eps) \leq \mc{P} (K, d, \eps).
    \end{equation*}
\end{lemma}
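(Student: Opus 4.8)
The plan is to establish the two inequalities separately, each via an elementary extremal argument built directly from the definitions of net, covering number, and packing number (\cref{def:net,def:covering,def:packing}).

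For the right-hand inequality $\mc{N}(K, d, \eps) \le \mc{P}(K, d, \eps)$, I would take an $\eps$-separated subset $\mc{P} \subseteq K$ that is \emph{maximal} under inclusion (when the packing number is finite one may simply take one of maximum cardinality $\mc{P}(K, d, \eps)$; in general a maximal one exists by a routine Zorn's lemma argument, and if the packing number is infinite there is nothing to prove). The key observation is that a maximal $\eps$-separated set is automatically an $\eps$-net: if some $x \in K$ satisfied $d(x, y) > \eps$ for every $y \in \mc{P}$, then $\mc{P} \cup \{x\}$ would still be $\eps$-separated, contradicting maximality. Hence $\mc{P}$ is an $\eps$-net and $\mc{N}(K, d, \eps) \le |\mc{P}| = \mc{P}(K, d, \eps)$.

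For the left-hand inequality $\mc{P}(K, d, 2\eps) \le \mc{N}(K, d, \eps)$, I would fix any $2\eps$-separated set $\mc{P} \subseteq K$ together with an $\eps$-net $\mc{N} \subseteq K$ of minimum cardinality $\mc{N}(K, d, \eps)$, and construct an injection $\phi \colon \mc{P} \to \mc{N}$. For each $p \in \mc{P}$ the net property supplies a point of $\mc{N}$ within distance $\eps$ of $p$; let $\phi(p)$ be one such point. If $\phi(p_1) = \phi(p_2)$ for distinct $p_1, p_2 \in \mc{P}$, the triangle inequality gives $d(p_1, p_2) \le d(p_1, \phi(p_1)) + d(\phi(p_2), p_2) \le 2\eps$, contradicting the requirement $d(p_1, p_2) > 2\eps$ of $2\eps$-separation. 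So $\phi$ is injective, whence $|\mc{P}| \le |\mc{N}| = \mc{N}(K, d, \eps)$; taking the supremum over all $2\eps$-separated sets $\mc{P}$ yields the bound.

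There is no real obstacle in either direction: both are one-line pigeonhole/maximality arguments, and the statement is a textbook fact. The only place warranting a word of care is the appeal to a maximal $\eps$-separated set in the first inequality — in an arbitrary metric space one should either restrict to the (only nontrivial) case of finite packing number, where a maximum-cardinality separated set exists outright, or invoke Zorn's lemma for a maximal one, observing that maximality, rather than maximum cardinality, is all the argument actually uses.
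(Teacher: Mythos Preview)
Your proof is correct and is precisely the standard argument from the cited reference \cite{vershynin}; the paper itself does not supply a proof but merely invokes the result as a known fact. There is nothing to add.
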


In all our applications, we will take $d(\cdot, \cdot)$ to be the Euclidean distance and the sets $K$ will always be $\ell_p$ balls for $0 < p \leq 2$. The following lemma follows from a standard volumetric argument.

\begin{lemma}
    \label{lem:packlp}
    Let $K = \mb{S}_p^d$ for $0 < p \leq 2$ and $0 < \eps \leq 1$. Then, we have:

    \begin{equation*}
        \mc{N} (K, \norm{\cdot}_2, \eps) \leq \lprp{\frac{3}{\eps}}^d.
    \end{equation*}
\end{lemma}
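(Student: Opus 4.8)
The final statement to prove is \cref{lem:packlp}: for $K = \mb{S}_p^d$ with $0 < p \le 2$ and $0 < \eps \le 1$, the covering number satisfies $\mc{N}(K, \norm{\cdot}_2, \eps) \le (3/\eps)^d$.

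\textbf{Proof plan.} The plan is to run the standard volumetric (greedy packing) argument. First I would invoke \cref{lem:packcov} to reduce the covering bound to a packing bound: it suffices to show $\mc{P}(\mb{S}_p^d, \norm{\cdot}_2, \eps) \le (3/\eps)^d$, since $\mc{N}(K, \norm{\cdot}_2, \eps) \le \mc{P}(K, \norm{\cdot}_2, \eps)$. So let $\mc{P} = \{y_1, \dots, y_N\}$ be a maximal $\eps$-separated subset of $\mb{S}_p^d$ (with respect to $\ell_2$ distance). The Euclidean balls $B_2(y_i, \eps/2)$ are pairwise disjoint (if $z$ were in two of them, the triangle inequality would give $\norm{y_i - y_j}_2 \le \eps$, contradicting separation).

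Next I would locate all these small balls inside one bounded region and compare Lebesgue volumes. Each $y_i$ has $\norm{y_i}_p = 1$; since $0 < p \le 2$, the $\ell_p$ unit sphere is contained in the $\ell_2$ unit ball, i.e.\ $\norm{y_i}_2 \le \norm{y_i}_p = 1$ (this is the one monotonicity fact I need: $\norm{v}_2 \le \norm{v}_p$ for $p \le 2$). Hence every ball $B_2(y_i, \eps/2)$ is contained in $B_2(0, 1 + \eps/2) \subseteq B_2(0, 3/2)$. Taking $d$-dimensional volumes and using disjointness,
\begin{equation*}
  N \cdot \vol(B_2(0, \eps/2)) \le \vol(B_2(0, 3/2)),
\end{equation*}
and since $\vol(B_2(0,r)) = r^d \vol(B_2(0,1))$, the common factor cancels and we get $N \cdot (\eps/2)^d \le (3/2)^d$, i.e.\ $N \le (3/\eps)^d$. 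Combining with the reduction from the first step finishes the proof.

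\textbf{Main obstacle.} There is no real obstacle here — this is a textbook argument. The only points requiring a moment's care are (i) correctly citing \cref{lem:packcov} in the right direction (covering $\le$ packing at the same scale), and (ii) justifying the containment $\mb{S}_p^d \subseteq B_2(0,1)$ for $p \le 2$, which is immediate from $\norm{v}_2 \le \norm{v}_p$. One could alternatively phrase the whole thing directly via a maximal $\eps$-net rather than a maximal packing, but routing through \cref{lem:packcov} (already stated in the excerpt) is cleanest.
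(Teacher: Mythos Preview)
Your proposal is correct and follows essentially the same argument as the paper: reduce via \cref{lem:packcov} to bounding the packing number, observe that $\mb{S}_p^d \subset B_2(0,1)$ since $\norm{\cdot}_2 \le \norm{\cdot}_p$ for $p \le 2$, and compare volumes of the disjoint $\eps/2$-balls against the enclosing ball of radius $1+\eps/2 \le 3/2$. The only cosmetic difference is that the paper keeps the radius as $1+\eps/2$ and simplifies $(1+2/\eps)^d \le (3/\eps)^d$ at the end using $\eps \le 1$, whereas you apply $\eps \le 1$ one step earlier; the bound obtained is identical.
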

\begin{proof}
    Note from \cref{lem:packcov} that it is sufficient to prove:

    \begin{equation*}
        \mc{P} (K, \norm{\cdot}_2, \eps) \leq \lprp{\frac{3}{\eps}}^d.
    \end{equation*}

    Let $T$ be any $\eps$-separated set in $K$ and let $T_\eps = \{x: \exists y \in T, \twonorm{x - y} \leq \eps / 2\}$. Note from the triangle inequality and the fact that $T$ is $\eps$-separated, that for any point $x \in T_\eps$, there exists a unique point $y \in T_\eps$ such that $\twonorm{x - y} \leq \eps / 2$. Now, for any point $x \in \mb{S}_p^d$, we have:

    \begin{equation*}
        \twonorm{x}^2 = \sum_{i = 1}^d \abs{x_i}^2 \leq  \sum_{i = 1}^d \abs{x_i}^p = 1
    \end{equation*}

    where the inequality follows from the fact that $\abs{x_i} \leq 1$. Therefore, we have $T \subset \mb{B}_2 (0, 1, d)$ where $\mb{B}_2(x, r, d) = \{y \in \mb{R}^d : \norm{y - x} \leq r\}$. From this, we obtain from the triangle inequality that $T_\eps \subset \mb{B}_2(0, 1 + \eps / 2, d)$. From the fact that the sets $\mb{B}_2(x, \eps/2, d)$ and $\mb{B}_2(y, \eps/2, d)$ are disjoint for distinct $x, y \in T$, we have:

    \begin{equation*}
        \vol{(T_\eps)} = \abs{T} \vol{(\mb{B}_2(0, \eps / 2, d))} \leq \vol{(\mb{B}_2(0, 1 + \eps / 2, d))}.
    \end{equation*}

    By dividing both sides and by using that fact that $\vol{(\mb{B}_2(0, l, d))} = l^d \vol{(\mb{B}_2(0, 1, d))}$, we get:

    \begin{equation*}
        \abs{T} \leq \frac{\lprp{1 + \frac{\eps}{2}}^d}{(2 / \eps)^d} = \lprp{1 + \frac{2}{\eps}}^d \leq \lprp{\frac{3}{\eps}}^d
    \end{equation*}

    as $\eps \leq 1$ and this concludes the proof of the lemma.
\end{proof}

We will also make use of Hoeffding's Inequality:

\begin{theorem}{\cite{blm}}
    \label{thm:hoeff}
    Let $X_1, \dots, X_n$ be independent random variables such that $X_i \in [a_i, b_i]$ almost surely for $i \in [n]$ and let $S = \sum_{i = 1}^n X_i - \mb{E} [X_i]$. Then, for every $t > 0$:

    \begin{equation*}
        \mb{P} \lbrb{S \geq t} \leq \exp \lprp{-\frac{2t^2}{\sum_{i = 1}^n (b_i - a_i)^2}}.
    \end{equation*}
\end{theorem}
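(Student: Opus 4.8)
The plan is to establish this classical bound by the Chernoff (exponential moment) method. First I would fix $t>0$ and a free parameter $\lambda>0$ and apply Markov's inequality to the nonnegative random variable $e^{\lambda S}$, obtaining
$$\mathbb{P}(S\geq t)\ \leq\ e^{-\lambda t}\,\mathbb{E}\big[e^{\lambda S}\big]\ =\ e^{-\lambda t}\prod_{i=1}^{n}\mathbb{E}\big[e^{\lambda(X_i-\mathbb{E}X_i)}\big],$$
where the factorization uses independence of the $X_i$. This reduces the whole problem to bounding the moment generating function of a single bounded, mean-zero random variable.

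The key step, and the one I expect to be the main technical obstacle, is Hoeffding's lemma: if $Y\in[a,b]$ almost surely and $\mathbb{E}[Y]=0$, then $\mathbb{E}[e^{\lambda Y}]\leq\exp(\lambda^2(b-a)^2/8)$. I would prove it by convexity of $u\mapsto e^{\lambda u}$: for every $y\in[a,b]$ one has $e^{\lambda y}\leq\frac{b-y}{b-a}e^{\lambda a}+\frac{y-a}{b-a}e^{\lambda b}$, so taking expectations and using $\mathbb{E}[Y]=0$ gives $\mathbb{E}[e^{\lambda Y}]\leq\frac{b}{b-a}e^{\lambda a}-\frac{a}{b-a}e^{\lambda b}=:e^{\psi(\lambda)}$. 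Setting $p=-a/(b-a)\in[0,1]$ and $h=\lambda(b-a)$ one checks $\psi(\lambda)=-hp+\log(1-p+pe^{h})$, that this vanishes together with its first $h$-derivative at $h=0$, and that its second $h$-derivative equals $\frac{p(1-p)e^{h}}{(1-p+pe^{h})^{2}}\leq\frac14$ (the last bound is AM–GM, $uv\leq((u+v)/2)^2$ with $u=1-p$, $v=pe^{h}$). A second-order Taylor expansion with remainder then yields $\psi(\lambda)\leq h^{2}/8=\lambda^{2}(b-a)^{2}/8$. The care here lies exactly in verifying the vanishing first-order term and the uniform $\frac14$ bound on $\psi''$.

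Combining the two steps, $\mathbb{E}[e^{\lambda S}]\leq\exp\!\big(\tfrac{\lambda^{2}}{8}\sum_{i=1}^{n}(b_i-a_i)^{2}\big)$, hence $\mathbb{P}(S\geq t)\leq\exp\!\big(-\lambda t+\tfrac{\lambda^{2}}{8}\sum_{i=1}^{n}(b_i-a_i)^{2}\big)$ for every $\lambda>0$. Finally I would optimize the exponent, a downward parabola in $\lambda$ minimized at $\lambda^{\star}=4t/\sum_{i=1}^{n}(b_i-a_i)^{2}$; substituting $\lambda^{\star}$ gives precisely $\exp\!\big(-2t^{2}/\sum_{i=1}^{n}(b_i-a_i)^{2}\big)$, as claimed. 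Every step after Hoeffding's lemma is a routine computation.
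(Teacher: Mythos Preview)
Your proof is correct and is the standard textbook argument (Chernoff bound plus Hoeffding's lemma); the paper itself does not prove this theorem but simply cites it as a known result from \cite{blm}, so there is no in-paper proof to compare against. One cosmetic slip: the exponent $-\lambda t+\tfrac{\lambda^{2}}{8}\sum_i(b_i-a_i)^2$ is an \emph{upward} parabola in $\lambda$ that you minimize, not a downward one, but your optimizer $\lambda^\star$ and the final bound are correct.
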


We will also require the bounded differences inequality:

\begin{theorem}{\cite{blm}}
    \label{thm:bddDiff}
    Let $\{X_i \in \mc{X}\}_{i = 1}^n$ be $n$ independent random variables and suppose $f : \mc{X}^n \to \mb{R}$ satisfies the bounded differences condition with constants $\{c_i\}_{i = 1}^n$; i.e $f$ satisfies:

    \begin{equation*}
        \forall i \in [n]: \sup_{\substack{x_1, \dots, x_n \in \mc{X}\\ x_i^\prime \in \mc{X}}} \abs{f(x_1, \dots, x_n) - f(x_1, \dots, x_i^\prime, \dots, x_n)} \leq c_i.
    \end{equation*}

    Then, we have for the random variable $Z = f(X_1, \dots, X_n)$:

    \begin{equation*}
        \mb{P} \lbrb{Z - \mb{E} [Z] \geq t} \leq \exp \lprp{-\frac{t^2}{2v}}
    \end{equation*}

    where $v = \frac{\sum_{i = 1}^n c_i^2}{4}$. 
\end{theorem}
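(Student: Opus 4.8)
The plan is to prove this via the Doob martingale (the method of bounded martingale differences that underlies McDiarmid's inequality). First I would fix the filtration $\mc{F}_i = \sigma(X_1,\dots,X_i)$, with $\mc{F}_0$ trivial, and define the Doob martingale $Z_i = \mb{E}[Z \mid \mc{F}_i]$, so that $Z_0 = \mb{E}[Z]$ and $Z_n = Z$. Writing $D_i = Z_i - Z_{i-1}$ for the martingale differences, we get the telescoping identity $Z - \mb{E}[Z] = \sum_{i=1}^n D_i$ together with $\mb{E}[D_i \mid \mc{F}_{i-1}] = 0$. The strategy is then the Chernoff/Azuma route: bound $\mb{E}[e^{\lambda(Z-\mb{E}Z)}]$ by peeling off the $D_i$ one at a time, then apply Markov's inequality and optimize over $\lambda$.

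The key step — and the main obstacle — is to show that, conditioned on $\mc{F}_{i-1}$, the difference $D_i$ takes values in an interval of width at most $c_i$. Here independence is essential. I would define $g_i(x_1,\dots,x_i) = \mb{E}[f(x_1,\dots,x_i,X_{i+1},\dots,X_n)]$, so that by independence $Z_i = g_i(X_1,\dots,X_i)$ and $Z_{i-1} = \mb{E}_{X_i'}[g_i(X_1,\dots,X_{i-1},X_i')]$, whence $D_i = g_i(X_1,\dots,X_i) - \mb{E}_{X_i'}[g_i(X_1,\dots,X_{i-1},X_i')]$. Conditionally on $\mc{F}_{i-1}$, therefore, $D_i$ lies between $\inf_{x} g_i(X_1,\dots,X_{i-1},x) - Z_{i-1}$ and $\sup_{x} g_i(X_1,\dots,X_{i-1},x) - Z_{i-1}$; the width of this interval is $\sup_{x} g_i(\cdot,x) - \inf_{x} g_i(\cdot,x) \le c_i$, because for any $x, x'$ we have $\abs{g_i(x_1,\dots,x_{i-1},x) - g_i(x_1,\dots,x_{i-1},x')} \le \sup \abs{f(\cdots) - f(\cdots,x_i',\cdots)} \le c_i$, i.e.\ the bounded differences hypothesis on $f$ passes to $g_i$ after averaging out $X_{i+1},\dots,X_n$.

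With this interval bound in hand the rest is routine. I would invoke Hoeffding's lemma: if $W$ satisfies $\mb{E}[W \mid \mc{F}_{i-1}] = 0$ and, conditionally on $\mc{F}_{i-1}$, lies in an interval of length $c_i$, then $\mb{E}[e^{\lambda W}\mid \mc{F}_{i-1}] \le e^{\lambda^2 c_i^2/8}$ for all $\lambda$; this is essentially the single-variable case behind \cref{thm:hoeff}. Applying it to each $D_i$ and using the tower property to peel off the innermost conditioning repeatedly gives $\mb{E}[e^{\lambda(Z - \mb{E}Z)}] \le \exp(\lambda^2 \sum_{i=1}^n c_i^2/8)$. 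Markov's inequality then yields $\mb{P}(Z - \mb{E}[Z] \ge t) \le \exp(-\lambda t + \lambda^2 \sum_i c_i^2/8)$, and choosing $\lambda = 4t/\sum_i c_i^2$ gives $\exp(-2t^2/\sum_i c_i^2) = \exp(-t^2/(2v))$ with $v = (\sum_i c_i^2)/4$, which is the claimed bound. The only genuinely delicate point is the conditioning-plus-independence argument establishing the width of $D_i$; the Hoeffding-lemma step, the tower-property peeling, and the optimization over $\lambda$ are all standard.
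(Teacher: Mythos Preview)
Your proof is correct and is exactly the standard Doob-martingale/Azuma--Hoeffding argument. Note that the paper does not actually prove this statement: it is quoted as a known result from \cite{blm}, and the proof given there (see their Section~6.1) is precisely the martingale-differences argument you outline, so there is nothing further to compare.
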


We also present the Ledoux-Talagrand Contraction Inequality:
\begin{theorem}[\cite{ledtal}]
    \label{thm:ledtal}
    Let $X_1, \dots, X_n \in \mc{X}$ be i.i.d.~random vectors, $\mathcal{F}$ be a class of real-valued functions on $\mc{X}$ and $\sigma_i, \dots, \sigma_n$ be independent Rademacher random variables. If $\phi: \mb{R} \rightarrow \mb{R}$ is an $L$-Lipschitz function with $\phi(0) = 0$, then:
  
    \begin{equation*}
      \mathbb{E} \sup_{f \in \mathcal{F}} \sum_{i = 1}^n \sigma_i \phi(f(X_i)) \leq 2L\cdot \mb{E} \sup_{f \in \mathcal{F}} \sum_{i = 1}^{n} \sigma_i f(X_i).
    \end{equation*}
  \end{theorem}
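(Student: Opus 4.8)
Since this is the classical Ledoux--Talagrand contraction inequality, the plan is to reproduce the standard rescaling-plus-coordinate-peeling argument; in fact that argument yields the sharper constant $1$ in place of the stated $2$, from which the statement follows trivially because the right-hand side is nonnegative.

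First I would reduce to $L = 1$ by replacing $\phi$ with $\phi/L$ (still a contraction, still vanishing at $0$). Next I would condition on the sample: fixing $X_1,\dots,X_n$ and setting $T = \{(f(X_1),\dots,f(X_n)) : f \in \mc{F}\} \subseteq \mb{R}^n$, the tower property reduces the claim to the deterministic-set inequality
\begin{equation*}
  \mb{E}_\sigma \sup_{t \in T} \sum_{i=1}^n \sigma_i \phi(t_i) \;\le\; \mb{E}_\sigma \sup_{t \in T} \sum_{i=1}^n \sigma_i t_i
\end{equation*}
for every bounded $T \subseteq \mb{R}^n$ and every $1$-Lipschitz $\phi$. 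Taking $\mb{E}_X$ afterwards gives the theorem with constant $1 \le 2$, and the right-hand side is $\ge 0$ since it dominates $\mb{E}_\sigma \sum_i \sigma_i f_0(X_i) = 0$ for any fixed $f_0 \in \mc{F}$, so multiplying by $2$ is harmless.

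The heart of the argument is to peel off one coordinate at a time. Conditioning also on $\sigma_1,\dots,\sigma_{n-1}$ and writing $g(t) = \sum_{i<n}\sigma_i\phi(t_i)$, which is just some bounded function of $t$, it suffices to prove the one-coordinate replacement bound
\begin{equation*}
  \mb{E}_{\sigma_n}\sup_{t\in T}\big(g(t) + \sigma_n\phi(t_n)\big) \;\le\; \mb{E}_{\sigma_n}\sup_{t\in T}\big(g(t) + \sigma_n t_n\big);
\end{equation*}
the point is that after this replacement the objective is again of the form ``$\sum_{i<n}\sigma_i\phi(t_i)$ plus a function of the remaining coordinates,'' so the same bound applies to coordinate $n-1$, then $n-2$, and so on, and $n$ applications turn every $\phi(t_i)$ into $t_i$. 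Expanding $\mb{E}_{\sigma_n}$, this one-coordinate claim is purely deterministic:
\begin{equation*}
  \sup_{s\in T}\big(g(s)+\phi(s_n)\big) + \sup_{t\in T}\big(g(t)-\phi(t_n)\big) \;\le\; \sup_{s\in T}\big(g(s)+s_n\big) + \sup_{t\in T}\big(g(t)-t_n\big).
\end{equation*}
To prove it I would fix $\eta>0$, pick approximate maximizers $s,t\in T$ with $g(s)+\phi(s_n)\ge\sup_s(g(s)+\phi(s_n))-\eta$ and $g(t)-\phi(t_n)\ge\sup_t(g(t)-\phi(t_n))-\eta$ (suprema need not be attained), bound the left-hand side by $g(s)+g(t)+\big(\phi(s_n)-\phi(t_n)\big)+2\eta \le g(s)+g(t)+|s_n-t_n|+2\eta$ using the contraction property, and observe that whichever sign $s_n-t_n$ has, $g(s)+g(t)+|s_n-t_n|$ is at most $\sup_s(g(s)+s_n)+\sup_t(g(t)-t_n)$ (pair the $+$ term with whichever of $s,t$ carries the larger last coordinate). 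Letting $\eta\downarrow 0$ closes the one-coordinate inequality, and then the peeling, the conditioning, and the rescaling give the theorem.

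The only genuinely delicate point is the peeling bookkeeping: the one-coordinate lemma must be phrased with an \emph{arbitrary} bounded $g$ so that already-processed coordinates (appearing as $\sigma_i t_i$) and not-yet-processed ones (appearing as $\sigma_i\phi(t_i)$) can both be absorbed into $g$, and one should note that no constant-shift terms accumulate because $\mb{E}_\sigma[(\sum_i\sigma_i)c]=0$ (which is also why $\phi(0)=0$ is not actually needed for the expectation form). Once that is set up, every remaining step is routine.
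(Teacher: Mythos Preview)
The paper does not prove this theorem; it is merely cited from \cite{ledtal} as a standard tool and invoked once in the proof of \cref{lem:zexp}. There is therefore no ``paper's own proof'' to compare against.

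Your proposal is the classical Ledoux--Talagrand contraction argument and is correct. The reduction to $L=1$, the conditioning on the sample to pass to a deterministic $T\subset\mb{R}^n$, and the one-coordinate ``peeling'' lemma are exactly the standard route, and your proof of the one-coordinate inequality via approximate maximizers and the case split on $\sgn(s_n-t_n)$ is clean. You are also right that this argument delivers constant $1$ rather than $2$ in the expectation form; the factor $2$ in the cited statement is inherited from the general version for convex increasing $F$, and your observation that the right-hand side is nonnegative makes the weakening to $2L$ harmless. The only omissions are the routine measurability/integrability caveats on $\sup_{f\in\mc F}$ and the implicit assumption that the relevant suprema are finite (so one may pass to bounded $T$ by truncation), but these are standard and the paper itself does not address them either.
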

  
\section{ADE Data Structure for Euclidean Case}
\label{sec:eucproofs}

\begin{algorithm}[H]
    \begin{algorithmic}
        \State \textbf{Input: } Data points $X = \{x_i \in \mathbb{R}^d\}_{i = 1}^n$, Accuracy $\eps$, Failure Probability $\delta$
        
        \State $m \gets \Theta \lprp{\frac{1}{\eps^2}}$, $l \gets \Theta \lprp{\lprp{d + \log(1 / \delta)}}$
        \State For $j \in [l]$, let $\Pi_j \in \mathbb{R}^{m \times d}$ be such that each entry is drawn iid from $\mc{N}(0, 1 / m)$
        \State \textbf{Output: } $\mc{D} = \{\Pi_j, \{\Pi_j x_i\}_{i = 1}^n\}_{j = 1}^l$
    \end{algorithmic}
    \caption{Compute Data Structure (Euclidean space, based on \cite{jl})}
    \label{alg:preProcess}
\end{algorithm}

\begin{algorithm}[H]
    \begin{algorithmic}
        \State \textbf{Input: } Query Point $q$, Data Structure $\mc{D} = \{\Pi_j, \{\Pi_j x_i\}_{i = 1}^n\}_{j = 1}^l$, Failure Probability $\delta$
        
        \State $r \gets \Theta (\log n + \log 1 / \delta)$
        \State Sample $j_1, \dots j_r$ iid with replacement from $[l]$

        \State For $i \in [n], k \in [r]$, let $y_{i, k} \gets \norm{\Pi_{j_k} (q - x_i)}$
        \State For $i \in [n]$, let $\tilde d_i \gets \median(\{y_{i, k}\}_{k = 1}^r)$
        
        \State \textbf{Output: } $\{\tilde d_i\}_{i = 1}^n$
    \end{algorithmic}
    \caption{Process Query (Euclidean space, based on \cite{jl})}
    \label{alg:procQuery}
  \end{algorithm}

  In this section we show that logarithmic factors may be improved in an ADE for Euclidean space specifically. Our main theorem of this section is the following.

  \begin{theorem}
  \label{thm:mainThmEuc}
  For any $0<\delta<1$ there is a data structure for the ADE problem in Euclidean space that
    succeeds on any query with probability at least $1 - \delta$, even in a sequence of adaptively chosen queries. Furthermore, the time taken by the data structure to process each query is $O\lprp{\eps^{-2}(n + d) \log n / \delta}$, the space complexity is $O \lprp{\eps^{-2}(n + d)(d + \log 1 / \delta)}$, and the pre-processing time is $O \lprp{\eps^{-2}nd (d + \log 1 / \delta)}$.
\end{theorem}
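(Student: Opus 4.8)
The plan is to run the recipe of \cref{sec:overview,sec:analysis} essentially verbatim, replacing Indyk's $p$-stable sketch by the Gaussian sketch of \cref{alg:preProcess} and, crucially, replacing the net-plus-union-bound step by a sharper uniform-convergence argument. Call $\{\Pi_j\in\mb{R}^{m\times d}\}_{j=1}^l$ \emph{$(\eps,2)$-representative} if for every $x$ with $\twonorm{x}=1$ at least $0.9l$ of the indices $j$ satisfy $1-\eps\le\twonorm{\Pi_j x}\le 1+\eps$; this is the Euclidean analogue of \cref{def:eprepresentative} and is exactly \eqref{eq:rep} for the length-estimation problem in $\ell_2$. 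As in the proof of \cref{thm:mainThm}, the theorem then follows from two lemmas: (a) the matrices output by \cref{alg:preProcess} with $l=\Theta(d+\log 1/\delta)$ are $(\eps,2)$-representative with probability at least $1-\delta/2$, and (b) conditioned on $(\eps,2)$-representativeness, \cref{alg:procQuery} answers any $q\in\mb{R}^d$ correctly with probability at least $1-\delta/2$, even inside an adaptively chosen sequence. Part (b) is the proof of \cref{lem:procCorr} with $\median(\abs{\Pi_j v})/\medp$ replaced by $\twonorm{\Pi_j v}$: by the scale invariance of the definition, representativeness gives that for each fixed $i$ the estimate $y_{i,k}=\twonorm{\Pi_{j_k}(q-x_i)}$ lies in $[(1-\eps)\norm{q-x_i},(1+\eps)\norm{q-x_i}]$ with probability at least $0.9$ over the freshly drawn $j_k$, so \cref{thm:hoeff} with $r=\Theta(\log(n/\delta))$ makes at least a $0.6$ fraction of the $y_{i,k}$ accurate outside probability $\delta/(2n)$, whence $\tilde d_i=\median(\{y_{i,k}\})$ is accurate; a union bound over $i\in[n]$ closes part (b), and adaptivity is harmless because the only randomness drawn at query time --- the indices $j_1,\dots,j_r$ --- is independent of $q$ and of the $\Pi_j$ (this is the mechanism of \cref{thm:meta_thm}).

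The content is all in part (a). Rather than net $\mb{S}_2^d$ and union bound as in \cref{lem:dsUnif} (which costs $\log\abs{\mc{B}}=\Theta(d\log(d/\eps))$ together with the crude Frobenius bound of \cref{lem:onePointProj}), I would argue directly about $Z:=\sup_{\twonorm{x}=1}\frac1l\sum_{j=1}^l\bm{1}\lbrb{\twonorm{\Pi_j x}\notin[1-\eps,1+\eps]}$, which we must show is $\le 0.1$ except with probability $\delta/2$. For a fixed unit $x$, the Gaussian Johnson--Lindenstrauss bound with $m=\Theta(1/\eps^2)$ gives $\mb{P}[\twonorm{\Pi_j x}\notin[1-\eps,1+\eps]]\le\beta$, with $\beta$ an arbitrarily small constant controlled by the implied constant in $m$; hence the summand has expectation $\le\beta$ uniformly in $x$. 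Since changing one $\Pi_j$ moves $Z$ by at most $1/l$, \cref{thm:bddDiff} gives $Z\le\mb{E}Z+t$ outside probability $e^{-\Omega(lt^2)}$, which is $\le\delta/2$ once $l=\Omega(\log 1/\delta)$ and $t$ is a small constant, so it remains to prove $\mb{E}Z\le 0.05$ for $l=\Omega(d)$. I would dominate the indicator by $\psi(\twonorm{\Pi_j x}^2)$ for a fixed $O(1/\eps)$-Lipschitz ramp $\psi$ equal to $1$ outside $[1-\eps,1+\eps]$ and to $0$ on $[1-\tfrac\eps2,1+\tfrac\eps2]$ (still $\mb{E}\psi(\twonorm{\Pi_j x}^2)\le\beta$ by a $\chi^2$-tail estimate), apply symmetrization, and then \cref{thm:ledtal} (after recentering $\psi$ so $\psi(0)$ is subtracted off; the recentering term is constant in $x$, hence drops out of $\mb{E}\sup_x$) to strip $\psi$ at the cost of its Lipschitz constant, which leaves
\begin{equation*}
  \mb{E}Z \;\le\; \beta \;+\; \frac{O(1)}{\eps\, l}\,\mb{E}\,\sup_{\twonorm{x}=1}\Bigl|\,\sum_{j=1}^l\sigma_j\,\twonorm{\Pi_j x}^2\,\Bigr|\,.
\end{equation*}
The point of passing to the squared norm is that $\twonorm{\Pi_j x}^2=\inp{\Pi_j^\top\Pi_j}{xx^\top}$ is \emph{linear} in $xx^\top$, so the supremum equals $\lambda_{\max}\bigl(\sum_j\sigma_j\Pi_j^\top\Pi_j\bigr)=\sum_j\sigma_j+\lambda_{\max}\bigl(\sum_j\sigma_j(\Pi_j^\top\Pi_j-I)\bigr)$; taking $\mb{E}$ kills the first (mean-zero) term, and we are left with $\mb{E}\bigl\|\sum_j\sigma_j(\Pi_j^\top\Pi_j-I)\bigr\|$, a Rademacher sum of \emph{centered} Wishart matrices. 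The crucial feature of the Gaussian case is that these centered matrices are \emph{small}: each has operator norm $O(\sqrt{d/m})=O(\eps\sqrt d)$ with overwhelming probability, and $\sum_j\mb{E}(\Pi_j^\top\Pi_j-I)^2=\Theta(l d/m)\cdot I=\Theta(l d\eps^2)\cdot I$, so a matrix (non-commutative Khintchine / Bernstein) bound gives $\mb{E}\bigl\|\sum_j\sigma_j(\Pi_j^\top\Pi_j-I)\bigr\|=O(\eps\sqrt{l d}\cdot\mathrm{polylog}(d))$. Substituting, the factor $\eps$ cancels the $1/\eps$ from the contraction and leaves $\mb{E}Z\le\beta+O(\sqrt{d/l}\cdot\mathrm{polylog}(d))$, which is $\le 0.05$ once $l=\widetilde\Omega(d)$.

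Finally, \cref{alg:preProcess} produces $\mc{D}$, which is $(\eps,2)$-representative outside probability $\delta/2$ by (a); \cref{alg:procQuery} then answers any query, even adaptively chosen, outside probability $\delta/2$ by (b); a union bound gives total failure probability $\delta$. The resource bounds are the analogue of \cref{lem:lp-complexity} with $l=\Theta(d+\log 1/\delta)$ (so, compared to the $\ell_p$ case, no $\log(d/\eps)$ factor): storing the $\Pi_j$ and all $\Pi_j x_i$ takes $O(mld+nml)=O(\eps^{-2}(n+d)(d+\log 1/\delta))$; a query forms $\Pi_{j_k}q$ for $k\in[r]$ in $O(rmd)$ time and then $n$ Euclidean norms and medians in $O(nrm)$ time, for $O(\eps^{-2}(n+d)\log(n/\delta))$ total; and pre-processing is the cost of all products $\Pi_j x_i$, i.e.\ $O(nmld)=O(\eps^{-2}nd(d+\log 1/\delta))$, improvable by fast rectangular matrix multiplication exactly as in \cref{sec:analysis}. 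The step I expect to fight with is the last one in (a): making precise --- and non-circular --- that the $1/\eps$ loss from the contraction inequality is exactly offset by the $O(\eps\sqrt d)$ operator-norm scale of the centered Wishart matrices, and in particular shaving the residual $\mathrm{polylog}(d)$ left by the matrix concentration inequality so as to reach the claimed $l=\Theta(d+\log 1/\delta)$; a dimension-free matrix deviation bound (exploiting that the variance proxy above is a scalar multiple of the identity) or a variance-aware refinement of the uniform-convergence step is presumably what is needed there.
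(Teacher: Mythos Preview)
Your overall architecture is exactly the paper's: bounded differences on $Z$, then a symmetrization--contraction bound on $\mb{E}Z$. The only substantive difference is in the endgame of part (a), and it is precisely the point you flag as worrisome.

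Where you introduce a $O(1/\eps)$-Lipschitz ramp $\psi$ and then apply matrix Khintchine/Bernstein to $\sum_j\sigma_j(\Pi_j^\top\Pi_j-I)$, the paper does two simpler things. First, it bounds the indicator directly by $\tfrac{1}{\eps}\,\bigl|\,\|\Pi_j v\|^2-1\,\bigr|$ (same $1/\eps$ cost as your ramp) and applies \cref{thm:ledtal} with the $1$-Lipschitz contraction $\phi(t)=|t|$, arriving at $\mb{E}\sup_{\|v\|=1}\sum_j\sigma_j(\|\Pi_j v\|^2-1)$. Second --- and this is the step that removes your $\mathrm{polylog}(d)$ --- rather than applying a matrix concentration inequality to the Rademacher sum, it \emph{desymmetrizes back}: reintroduce independent copies $\Pi_j'$, absorb the signs, and split, obtaining (up to a constant) $l\cdot\mb{E}\bigl\|\tfrac{1}{l}\sum_j\Pi_j^\top\Pi_j - I\bigr\|$. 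The point is that $\tfrac{1}{l}\sum_j\Pi_j^\top\Pi_j$ is exactly the sample covariance of $ml$ i.i.d.\ standard Gaussians (after rescaling), so the standard sub-gaussian covariance bound (Vershynin, Theorem~4.6.1) gives $\mb{E}\|\cdot-I\|=O(\sqrt{d/(ml)})=O(\eps\sqrt{d/l})$ \emph{with no logarithm}. Plugging back, $\mb{E}Z\le O(1)\cdot\sqrt{d/l}+O(1)\cdot\sqrt{1/m}/\eps$, both of which are small constants once $l=\Theta(d)$ and $m=\Theta(1/\eps^2)$. This is exactly the ``dimension-free'' refinement you anticipated needing; the mechanism is not a sharper matrix inequality but the observation that after desymmetrization the quantity is a Gaussian covariance error, for which the log-free bound is classical. (A small cosmetic point: in your display the absolute value around $\sum_j\sigma_j\|\Pi_j x\|^2$ should not be there after contraction --- \cref{thm:ledtal} yields $\sup$, not $\sup|\cdot|$ --- and indeed your ``$\mb{E}$ kills $\sum_j\sigma_j$'' step relies on this.)
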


In the remainder of this section, we prove \cref{thm:mainThmEuc}. We start by introducing the formal guarantee required of the matrices, $\Pi_j$, returned by \cref{alg:preProcess}:
\begin{definition}
    \label{def:erepresentative}
    Given $\eps > 0$, we say a set of matrices $\{\Pi_j \in \mb{R}^{m \times d}\}_{j = 1}^l$ is $\eps$-representative if:

    \begin{equation*}
        \forall \norm{v} = 1: \sum_{j = 1}^l \bm{1} \lbrb{(1 - \eps) \leq \norm{\Pi_j v} \leq (1 + \eps)} \geq 0.9l.
    \end{equation*}
\end{definition}

Intuitively, the above definition states that for any any vector, $v$, most of the projections, $\Pi_j v$, approximately preserve its length. In our proofs, we will often instantiate the above definition by setting $v_i = \frac{q - x_i}{\norm{q - x_i}}$, for a query point $q$ and a dataset point $x_i$. As a consequence the above definition, this means that most of the projections $\Pi_j (q - x_i)$ have length approximately $\norm{q - x_i}$. By using standard concentration arguments this also holds for the matrices sampled in \cref{alg:procQuery} and the correctness of \cref{alg:procQuery} follows. The following lemma formalizes this intuition:

\begin{lemma}
    \label{lem:procQuery}
    Let $\eps > 0$ and $0 < \delta < 1$. Then, \cref{alg:procQuery}, when given as input query point $q \in \mb{R}^d$, $\mc{D} = \{\Pi_j, \{\Pi_j x_i\}_{i = 1}^n\}_{j = 1}^l$ for an $\eps$-representative set of matrices $\{\Pi_j\}_{j = 1}^l$, $\eps$ and $\delta$ outputs a set of estimates $\{\tilde d_i\}_{i = 1}^n$ satisfying:

    \begin{equation*}
        \forall i \in [n]: (1 - \eps) \norm{q - x_i} \leq \tilde d_i \leq (1 + \eps) \norm{q - x_i}
    \end{equation*}

    with probability at least $1 - \delta$. Furthermore, \cref{alg:procQuery} runs in time $O \lprp{(n + d)m(\log n + \log 1 / \delta)}$.
  \end{lemma}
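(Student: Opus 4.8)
The plan is to follow the template of the proof of \cref{lem:procCorr}, substituting the Euclidean norm estimator $y_{i,k} = \norm{\Pi_{j_k}(q - x_i)}$ for the $p$-stable median estimator and invoking $\eps$-representativeness (\cref{def:erepresentative}) in place of $(\eps, p)$-representativeness.

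First I would fix a data point index $i \in [n]$; if $q = x_i$ then every $y_{i,k} = 0$ and $\tilde d_i = 0$ is exactly correct, so assume $q \neq x_i$ and set $v_i = (q - x_i)/\norm{q - x_i}$. For $k \in [r]$ define $W_k = \bm{1}\lbrb{y_{i,k} \in [(1-\eps)\norm{q - x_i},\, (1+\eps)\norm{q - x_i}]}$. By homogeneity of the norm, $y_{i,k} = \norm{q - x_i}\cdot\norm{\Pi_{j_k} v_i}$, and since $\{\Pi_j\}_{j=1}^l$ is $\eps$-representative, at least $0.9l$ of the indices $j \in [l]$ satisfy $1 - \eps \le \norm{\Pi_j v_i} \le 1 + \eps$. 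As each $j_k$ is drawn uniformly from $[l]$, this yields $\mb{E}[W_k] \ge 0.9$; and the $W_k$ are mutually independent since the $j_k$ are drawn independently.

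Next, I would apply Hoeffding's inequality (\cref{thm:hoeff}) to $\sum_{k=1}^r W_k$: for $r = \Theta(\log n + \log 1/\delta)$ with a sufficiently large constant, $\sum_{k=1}^r W_k \ge 0.6r$ with probability at least $1 - \delta/n$. On this event strictly more than half the values $y_{i,1}, \dots, y_{i,r}$ lie in $[(1-\eps)\norm{q-x_i},\, (1+\eps)\norm{q-x_i}]$, so their median $\tilde d_i$ does too. A union bound over $i \in [n]$ gives the accuracy guarantee for all $n$ estimates simultaneously with probability at least $1 - \delta$.

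Finally, for the runtime, computing $\Pi_{j_k} q$ for each of the $r$ sampled indices costs $O(md)$ apiece, hence $O(rmd)$ in total; with the stored embeddings $\Pi_{j_k} x_i$ read off from $\mc{D}$, each $y_{i,k} = \norm{\Pi_{j_k} q - \Pi_{j_k} x_i}$ costs $O(m)$, contributing $O(nrm)$; and the $n$ medians over $r$ values each contribute $O(nr)$. Summing gives $O((n+d)mr) = O((n+d)m(\log n + \log 1/\delta))$. I do not expect a genuine obstacle here: the only points to verify are that $0.6r > r/2$ so the median lands in the good interval, and that the per-index failure probability $\delta/n$ survives the union bound — both immediate.
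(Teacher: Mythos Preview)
Your proposal is correct and follows essentially the same approach as the paper's proof: both normalize $q-x_i$ to a unit vector, use $\eps$-representativeness to conclude that each sampled estimate is good with probability at least $0.9$, apply Hoeffding's inequality (\cref{thm:hoeff}) to get $\sum_k W_k \ge 0.6r$ with probability at least $1-\delta/n$, pass to the median, and union bound over $i\in[n]$; the runtime analysis is also the same. The only cosmetic difference is that the paper defines $W_k$ as $\bm{1}\{j_k \in \mc{J}\}$ for $\mc{J}=\{j:(1-\eps)\le\norm{\Pi_j v}\le(1+\eps)\}$, which is equivalent to your indicator by homogeneity.
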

  \begin{proof}
        We will first prove that $\tilde d_i$ is a good estimate of $\norm{q - x_i}$ with high probability and obtain the guarantee for all $i \in [n]$ by a union bound. Now, let $i \in [n]$. From the definition of $\tilde d_i$, we see that the conclusion is trivially true for the case where $q = x_i$. Therefore, assume that $q \neq x_i$ and let $v = \frac{q - x_i}{\norm{q - x_i}}$. From the fact that $\{\Pi_j\}_{j = 1}^l$ is $\eps$-representative, the set $\mc{J}$, defined as:

    \begin{equation*}
        \mc{J} = \lbrb{j: (1 - \eps) \leq \norm{\Pi_j v} \leq (1 + \eps)}
    \end{equation*}

    has size at least $0.9l$. We now define the random variables $\tilde{y}_{i,k} = \norm{\Pi_{j_k} v}$ and $\tilde{z}_i = \median{\{\tilde{y}_{i,k}\}_{k = 1}^r}$ with $r, \{j_k\}_{k = 1}^r$ defined in \cref{alg:procQuery}. We see from the definition of $\tilde d_i$ that $\tilde d_i = \norm{q - x_i} \tilde{z}_i$. Therefore, it is necessary and sufficient to bound the probability that $\tilde{z}_i \in [1 - \eps, 1 + \eps]$. To do this, let $W_k = \bm{1}\lbrb{j_k \in \mc{J}}$ and $W = \sum_{k = 1}^r W_k$. Furthermore, we have $\mb{E}[W] \geq 0.9r$ and since $W_k \in \{0,1\}$, we have by Hoeffding's Inequality (\cref{thm:hoeff}):

    \begin{equation*}
        \mb{P} \lbrb{W \leq 0.6r} \leq \exp \lprp{- \frac{2(0.3r)^2}{r}} \leq \frac{\delta}{n}
    \end{equation*}

    from our definition of $r$. Furthermore, for all $k$ such that $j_k \in \mc{J}$, we have:

    \begin{equation*}
        1 - \eps \leq \tilde{y}_{i,k} \leq 1 + \eps.
    \end{equation*}

    Therefore, in the event that $W \geq 0.6r$, we have $(1 - \eps) \leq \tilde{z}_i \leq (1 + \eps)$. Hence, we get:

    \begin{equation*}
        \mb{P} \lbrb{(1 - \eps) \norm{q - x_i} \leq \tilde d_i \leq (1 + \eps) \norm{q - x_i}} \geq 1 - \frac{\delta}{n}.
    \end{equation*}

    From the union bound, we obtain:

    \begin{equation*}
        \mb{P}\lbrb{\forall i: (1 - \eps) \norm{q - x_i} \leq \tilde d_i \leq (1 + \eps) \norm{q - x_i}} \geq 1 - \delta.
    \end{equation*}

    This concludes the proof of correctness of the output of \cref{alg:procQuery}. The runtime guarantees follow from the fact that the runtime is dominated by the cost of computing the projections $\Pi_{j_k} v$ and the cost of computing $\lbrb{y_{i,k}}_{i \in [n], k \in [r]}$ which take time $O(dmr)$ and $O(nmr)$ respectively.
\end{proof}

Therefore, the runtime of \cref{alg:procQuery}, is determined by the dimension of the matrices, $\Pi_j$. The subsequent lemma bounds on this quantity as well as the number of matrices, $l$. In our proof of the following lemma, we use recent techniques developed in the context of heavy-tailed estimation \cite{meanGeneralNorms,mzcovariance} to obtain sharp bounds on both $l$ and $m$ avoiding extraneous log factors. 

\begin{lemma}
    \label{lem:dsUnifEuc}
    Let $0 < \eps, 0 < \delta < 1$ and $m, l$ be defined as in \cref{alg:preProcess}. Then, the output $\{\Pi_j\}_{j = 1}^l$ of \cref{alg:preProcess} satisfies:

    \begin{equation*}
        \forall \norm{v} = 1: \sum_{j = 1}^l \bm{1} \lbrb{(1 - \eps) \leq \norm{\Pi_j v} \leq (1 + \eps)} \geq 0.9l
    \end{equation*}

    with probability at least $1 - \delta$. Furthermore, \cref{alg:preProcess} runs in time $O(\mathsf{MM}(ml,d,n))$.
  \end{lemma}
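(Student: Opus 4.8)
The plan is to reduce the claim to the concentration of a single supremum of an empirical process. Since $\norm{\Pi_j v}^2\in[1-\eps,1+\eps]$ forces $\norm{\Pi_j v}\in[1-\eps,1+\eps]$ when $\eps\in(0,1)$, it suffices to prove the displayed statement with $\norm{\Pi_j v}$ replaced by the quadratic form $v^\top\Pi_j^\top\Pi_j v$. Writing $M_j=\Pi_j^\top\Pi_j - I$ (a centered symmetric matrix, since the rows of $\Pi_j$ have covariance $\tfrac1m I$), the goal becomes to show that $Z\coloneqq\sup_{\norm{v}=1}\tfrac1l\sum_{j=1}^l\bm{1}\lbrb{\abs{v^\top M_j v}>\eps}\le 0.1$ with probability at least $1-\delta$. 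I would first bound $\mb{E}[Z]$ by a small constant and then boost this to a high-probability statement via the bounded differences inequality (\cref{thm:bddDiff}).

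To control $\mb{E}[Z]$ I would pass to a Lipschitz surrogate: fix $\phi\colon\mb{R}\to[0,1]$ that is $(2/\eps)$-Lipschitz, satisfies $\phi(0)=0$, and sandwiches the indicator, $\bm{1}\{\abs{t}>\eps\}\le\phi(t)\le\bm{1}\{\abs{t}>\eps/2\}$; then $Z\le\sup_{\norm{v}=1}\tfrac1l\sum_j\phi(v^\top M_j v)$. Splitting the expectation of the right-hand side into a bias term $\sup_{\norm{v}=1}\tfrac1l\sum_j\mb{E}[\phi(v^\top M_j v)]$ and a fluctuation term, the bias is at most $\mb{P}[\abs{\norm{\Pi_1 v}^2-1}>\eps/2]$, and since $\norm{\Pi_1 v}^2\overset{d}{=}\tfrac1m\chi^2_m$, standard $\chi^2$ tail bounds make this at most $0.04$ once the hidden constant in $m=\Theta(1/\eps^2)$ is large enough. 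For the fluctuation term I would symmetrize with i.i.d.\ Rademacher signs $\sigma_j$ and then apply the Ledoux--Talagrand contraction inequality (\cref{thm:ledtal}) to the $(2/\eps)$-Lipschitz map $\phi$ (which vanishes at $0$) and the function class $\{\Pi\mapsto v^\top(\Pi^\top\Pi-I)v : \norm{v}=1\}$; up to absolute constants this bounds the fluctuation by $\tfrac{1}{\eps l}\,\mb{E}\norm*{\sum_{j=1}^l\sigma_j M_j}$, where $\norm{\cdot}$ is the spectral norm, using that $\sup_{\norm{v}=1}\abs{v^\top M v}$ equals the spectral norm of a symmetric $M$.

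The crux — and the step I expect to be the main obstacle — is to bound $\mb{E}\norm*{\sum_{j=1}^l\sigma_j M_j}$ with \emph{no} spurious $\log d$ factor; a net argument followed by matrix Bernstein would lose exactly such a factor (this is why the bound in \cref{thm:mainThm} carries an extra $\log d$, which we want to avoid here). The remedy is to invoke the sharp, dimension-free estimates from the heavy-tailed estimation literature \cite{meanGeneralNorms,mzcovariance}: conditioning on $\sigma$ and grouping indices by sign, $\sum_j\sigma_j M_j$ is a difference of two matrices of the form $\sum_{j\in S}\Pi_j^\top\Pi_j-\abs{S}I$, each equal to $\abs{S}$ times the deviation from the identity of the sample covariance of $m\abs{S}\le ml$ i.i.d.\ standard Gaussian vectors, whose expected spectral norm is $O\!\left(\sqrt{d/(m\abs{S})}+d/(m\abs{S})\right)$ by the log-free non-asymptotic bound for Gaussian covariance; multiplying by $\abs{S}$ gives $O(\eps\sqrt{ld}+\eps^2 d)$ for each. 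Hence $\mb{E}\norm*{\sum_j\sigma_j M_j}=O(\eps\sqrt{ld}+\eps^2 d)$, and since $l=\Theta(d+\log(1/\delta))$ with a sufficiently large constant we may take $l\ge Kd$ for any fixed constant $K$, so $\eps\sqrt{ld}+\eps^2 d\le O(\eps l/\sqrt{K})$ and the fluctuation term is $O(1/\sqrt{K})\le 0.01$. Combined with the bias bound, $\mb{E}[Z]\le 0.05$.

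Finally, $Z$ depends on the independent matrices $\Pi_1,\dots,\Pi_l$ with the bounded-differences property: replacing one $\Pi_{j_0}$ moves $\tfrac1l\sum_j\bm{1}\{\cdot\}$ by at most $1/l$ for every $v$, hence moves $Z$ by at most $1/l$. The bounded differences inequality (\cref{thm:bddDiff}) with $c_j=1/l$ then yields $\mb{P}[Z\ge 0.1]\le\mb{P}[Z\ge\mb{E}[Z]+0.05]\le\exp(-2l(0.05)^2)\le\delta$ once $l=\Omega(\log(1/\delta))$, which our choice of $l$ guarantees. On this event every unit $v$ has at least $0.9l$ indices $j$ with $v^\top\Pi_j^\top\Pi_j v\in[1-\eps,1+\eps]$, hence $\norm{\Pi_j v}\in[1-\eps,1+\eps]$, which is exactly the asserted property. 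The runtime claim is immediate: apart from generating the $\Pi_j$ in time $O(mld)$, the work is to compute $\Pi A$, where $\Pi\in\mb{R}^{ml\times d}$ stacks the $\Pi_j$ row-wise and $A\in\mb{R}^{d\times n}$ has the $x_i$ as columns, i.e.\ $O(\mathsf{MM}(ml,d,n))$.
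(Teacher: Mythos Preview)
Your proposal is correct and follows essentially the same route as the paper: define the same supremum $Z$, bound $\mb{E}[Z]$ via symmetrization plus Ledoux--Talagrand contraction (\cref{thm:ledtal}) to reduce to the spectral norm of a (signed) sample covariance deviation, then upgrade to high probability via the bounded differences inequality (\cref{thm:bddDiff}). The only differences are cosmetic---the paper bounds the indicator crudely by $|t|/\eps$ rather than via a Lipschitz surrogate $\phi$, and after contraction it desymmetrizes back to the unsigned deviation $\norm*{\frac{1}{l}\sum_j \Pi_j^\top\Pi_j - I}$ rather than splitting the Rademacher sum by sign---but both paths terminate at the same log-free Gaussian covariance estimate.
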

  \begin{proof}
    We must show that for any $x \in \mathbb{R}^d$, a large fraction of the $\Pi_j$ approximately preserve its length. Concretely, we will analyze the following random variable where $l, m$ are defined in \cref{alg:preProcess}:

\begin{equation*}
    Z = \max_{\norm{v} = 1} \sum_{j = 1}^l \bm{1} \lbrb{\abs{\norm{\Pi_j v}^2 - 1} \geq \eps}.
\end{equation*}

Intuitively, $Z$ searches for a unit vector $v$ whose length is well approximated by the fewest number of sample projection matrices $\Pi_j$. We first notice that $Z$ satisfies a bounded differences condition.

\begin{lemma}
    \label{lem:zbdd}
    Let $k \in [l]$, $\Pi_k^\prime \in \mathbb{R}^{m \times d}$ and $Z^\prime$ be defined as:

    \begin{equation*}
        Z^\prime = \max_{\norm{v} = 1} \bm{1} \lbrb{\abs{\norm{\Pi_k^\prime v}^2 - 1} \geq \eps} + \sum_{\substack{1 \leq j \leq l \\ i \neq k}} \bm{1} \lbrb{\abs{\norm{\Pi_j v}^2 - 1} \geq \eps}.
    \end{equation*}

    Then, we have:

    \begin{equation*}
        \abs{Z - Z^\prime} \leq 1.
    \end{equation*}
\end{lemma}

\begin{proof}
    Let $Y_j(v) = \bm{1}\lbrb{\abs{\norm{\Pi_j v}^2 - 1} \geq \eps}$ and $Y_k^\prime(v) = \bm{1}\lbrb{\abs{\norm{\Pi_k^\prime v}^2 - 1} \geq \eps}$. The proof follows from the following manipulation:

    \begin{align*}
        Z - Z^\prime &= \max_{\norm{v} = 1} \sum_{j = 1}^l Y_j(v) - \max_{\norm{v} = 1} Y_k^\prime(v) + \sum_{\substack{1 \leq j \leq l \\ i \neq k}} Y_j(v) \\
        &\leq \max_{\norm{v} = 1} \sum_{j = 1}^l Y_j(v) - Y_k^\prime(v) - \sum_{\substack{1 \leq j \leq l \\ i \neq k}} Y_j(v) \\
        &= \max_{\norm{v} = 1} Y_k(v) - Y_k^\prime(v) \leq 1.
    \end{align*}

    Through a similar manipulation, we get $Z^\prime - Z \leq 1$ and this concludes the proof of the lemma.
\end{proof}

As a consequence of \cref{thm:bddDiff}, it now suffices for us to bound the expected value of $Z$.

\begin{lemma}
    \label{lem:zexp}
    We have $\mb{E}[Z] \leq 0.05l$. 
\end{lemma}

\begin{proof}
    We bound the expected value of $Z$ as follows, using an approach of \cite{meanGeneralNorms} (see the proof of their Theorem 2):

    \begin{align*}
        \mb{E} [Z] &\leq \frac{1}{\eps} \cdot \mb{E} \lsrs{\max_{\norm{v} = 1} \sum_{j = 1}^l \abs{\norm{\Pi_j v}^2 - 1}} \\
        &\leq \frac{1}{\eps}\cdot \lprp{\mb{E} \lsrs{\max_{\norm{v} = 1} \sum_{j = 1}^l \abs{\norm{\Pi_j v}^2 - 1} - \mb{E} \abs{\norm{\Pi_j^\prime v}^2 - 1}} + l \max_{v} \mb{E} \lsrs{\abs{\norm{\Pi v}^2 - 1}}}
    \end{align*}

    where $\{\Pi_j^\prime\}_{j = 1}^l, \Pi$ are mutually independent and independent of $\{\Pi_j\}_{j = 1}^l$ with the same distribution. We first bound the second term in the above display. We have for all $\norm{v} = 1$:

    \begin{align*}
        \mb{E} \lsrs{\abs{\norm{\Pi v}^2 - 1}} &\leq \sqrt{\mb{E} \lsrs{\lprp{\norm{\Pi v}^2 - 1}^2}} = \sqrt{\mb{E} [\sum_{i = 1}^m(\inp{w_i}{v}^2 - m^{-1})^2]} \\
        &\leq \sqrt{\mb{E} \lsrs{\sum_{i = 1}^m \inp{w_i}{v}^4}} = \sqrt{\frac{3}{m}}.
    \end{align*}

    where $w_i \thicksim \mc{N}(0, I / m)$ are the rows of the matrix $\Pi$. For the first term, we have:

    \begin{align*}
        &\mb{E}_{\Pi_j} \lsrs{\max_{\norm{v} = 1} \sum_{j = 1}^l \abs{\norm{\Pi_j v}^2 - 1} - \mb{E}_{\Pi_j^\prime} \lsrs{\abs{\norm{\Pi_j^\prime v}^2 - 1}}} \\
        &\leq \mb{E}_{\Pi_j, \Pi_j^\prime} \lsrs{\max_{\norm{v} = 1} \sum_{j = 1}^l \abs{\norm{\Pi_j v}^2 - 1} - \abs{\norm{\Pi_j^\prime v}^2 - 1}} \\
        &= \mb{E}_{\Pi_j, \Pi_j^\prime, \sigma_j} \lsrs{\max_{\norm{v} = 1} \sum_{j = 1}^l \sigma_j \lprp{\abs{\norm{\Pi_j v}^2 - 1} - \abs{\norm{\Pi_j^\prime v}^2 - 1}}} && \sigma_j \overset{iid}{\sim} \{\pm 1\}\\
        &\leq 2 \mb{E}_{\Pi_j, \sigma_j} \lsrs{\max_{\norm{v} = 1} \sum_{j = 1}^l \sigma_j \abs{\norm{\Pi_j v}^2 - 1}} \\
        &\leq 4 \mb{E}_{\Pi_j, \sigma_j} \lsrs{\max_{\norm{v} = 1} \sum_{j = 1}^l \sigma_j \lprp{\norm{\Pi_j v}^2 - 1}} && \text{\cref{thm:ledtal}}\\
        &= 4 \mb{E}_{\Pi_j, \sigma_j} \lsrs{\max_{\norm{v} = 1} \sum_{j = 1}^l \sigma_j \lprp{\lprp{\norm{\Pi_j v}^2 - 1} - \mb{E}_{\Pi_j^\prime} \lsrs{\norm{\Pi_j^\prime v}^2 - 1}}}\\
        &\leq 4 \mb{E}_{\Pi_j, \Pi_j^\prime, \sigma_j} \lsrs{\max_{\norm{v} = 1} \sum_{j = 1}^l \sigma_j \lprp{\lprp{\norm{\Pi_j v}^2 - 1} - \lprp{\norm{\Pi_j^\prime v}^2 - 1}}} \\
        &= 4 \mb{E}_{\Pi_j, \Pi_j^\prime} \lsrs{\max_{\norm{v} = 1} \sum_{j = 1}^l \lprp{\lprp{\norm{\Pi_j v}^2 - 1} - \lprp{\norm{\Pi_j^\prime v}^2 - 1}}} \\
        &\leq 4 \mb{E}_{\Pi_j} \lsrs{\max_{\norm{v} = 1} \sum_{j = 1}^l \lprp{\norm{\Pi_j v}^2 - 1}} + 4 \mb{E}_{\Pi^\prime_j} \lsrs{\max_{\norm{v} = 1} - \sum_{j = 1}^l\lprp{\norm{\Pi_j^\prime v}^2 - 1}} \\
        &\leq 8 l \mb{E}_{\Pi_j} \lsrs{\norm*{ \frac{\sum_{j = 1}^l \Pi_j^\top \Pi_j}{l} - I}} \leq \frac{l\eps}{40}
    \end{align*}

    where the final inequality follows from the fact that $\frac{\sum_{j = 1}^l \Pi_j^\top\Pi_j}{l}$ is the empirical covariance matrix of $ml$ standard gaussian vectors and the final result follows from standard results on the concentration of empirical covariance matrices of sub-gaussian random vectors (See, for example, Theorem 4.6.1 from \cite{vershynin}) From the previous two bounds, we conclude the proof of the lemma.
\end{proof}

Now we complete the proof of \cref{lem:dsUnifEuc}.
    From \cref{lem:zbdd,lem:zexp,thm:bddDiff}, we have with probability at least $1 - \delta$:

    \begin{equation*}
        \forall \norm{v} = 1 : \sum_{j = 1}^l \bm{1} \lbrb{\abs{\norm{\Pi_j v}^2 - 1} \leq \eps} \geq 0.9l.
    \end{equation*}

    Now, condition on the above event. Let $\norm{v} = 1$ and let $\mc{J} = \{j: \abs{\norm{\Pi_j v}^2 - 1} \leq \eps\}$. For $j \in \mc{J}$:

    \begin{equation*}
        1 - \eps \leq \norm{\Pi_j v}^2 \leq 1 + \eps \implies 1 - \eps \leq \norm{\Pi_j v} \leq 1 + \eps.
    \end{equation*}

    This concludes the proof of correctness of the output of \cref{alg:preProcess}. The runtime guarantees follow from our setting of $m,l$ and the fact that the runtime is dominated by the time taken to compute $\Pi_j x_i$ for $j \in [l]$ and $i \in [n]$ which can be done by stacking the projection matrices into a single large matrix $\Pi = [\Pi_1^\top \Pi_2^\top \dots \Pi_l^\top]^\top$ and performing a matrix-matrix multiplication with the matrix containing the data points along the columns.

  \end{proof}

\cref{lem:procQuery,lem:dsUnifEuc} now imply \cref{thm:mainThmEuc}. An algorithm satisfying the guarantees of \cref{thm:mainThmEuc} follows by first constructing a data structure, $\mc{D}$, using \cref{alg:preProcess} with failure probability set to $\delta/2$ and accuracy requirement set to $\eps$. Each query can now be answered by \cref{alg:procQuery} with $\mc{D}$ by setting the failure probability to $\delta / 2$. The correctness and runtime guarantees of this construction follow from \cref{lem:procQuery,lem:dsUnifEuc} and the union bound.

\qed








\section{Lower Bound}
\label{sec:lower_bound}

Here we show that any Monte Carlo randomized data structure for handling adaptive ADE queries in Euclidean space with $>1/2$ success probability needs to use $\Omega(nd)$ space. Since this will be a lower bound on the space complexity in bits yet thus far we have been talking about vectors in $\mb{R}^d$, we need to make an assumption on the precision being used. Fix $\eta\in(0,1/2)$ and define $B_\eta := \{ x \in\mb{R}^d : \|x\|_2 \le 1, \forall i\in[d],\ x_i\text{ is an integer multiple of } \eta/\sqrt d\}$. That is, $B_\eta$ is the subset of the Euclidean ball in which all vector coordinates are integer multiples of  $\eta/\sqrt d$ for some $\eta\in(0,1/2)$. We will show that the lower bound holds even in the special case that all database and query vectors are in $B_\eta$.

\begin{lemma}
  \label{lem:net-bound}
  $\forall \eta\in(0,1/2)$, $|B_\eta| = \exp(\Theta(d\log(1/\eta)))$
\end{lemma}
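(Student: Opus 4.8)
The plan is to prove matching upper and lower bounds, each of the form $\exp(\Theta(d\log(1/\eta)))$, by passing to a lattice-point count. The key preliminary observation is the rescaling $x \in B_\eta \iff x = (\eta/\sqrt{d})\,z$ for some $z \in \mb{Z}^d$ with $\twonorm{z} \le \sqrt{d}/\eta$; indeed $\twonorm{x}^2 = (\eta^2/d)\twonorm{z}^2$, so $\twonorm{x}\le 1$ holds exactly when $\twonorm{z} \le \sqrt d/\eta$, and $z \mapsto (\eta/\sqrt d)z$ is a bijection. Thus $|B_\eta|$ is precisely the number of integer points in the Euclidean ball of radius $R := \sqrt d/\eta$ in $\mb{R}^d$, and it suffices to estimate that quantity.

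For the lower bound I would exhibit a large product set inside $B_\eta$. Put $k := \floor{1/\eta}$, which is at least $1$ since $\eta < 1/2$, and consider all $z \in \{-k,\dots,k\}^d$. For any such $z$ we have $\twonorm{(\eta/\sqrt d)z}^2 = (\eta^2/d)\sum_i z_i^2 \le (\eta^2/d)\cdot d k^2 = (\eta k)^2 \le 1$, so all $(2k+1)^d$ of these distinct points lie in $B_\eta$. Since $\eta<1/2$ gives $\floor{1/\eta} > 1/(2\eta)$, we get $2k+1 \ge 1/\eta$, and hence $|B_\eta| \ge (1/\eta)^d = \exp(d\log(1/\eta)) = \exp(\Omega(d\log(1/\eta)))$.

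For the upper bound, the naive estimate — each coordinate of $x \in B_\eta$ is a multiple of $\eta/\sqrt d$ lying in $[-1,1]$, so $|B_\eta| \le (2\sqrt d/\eta + 1)^d$ — is too lossy, since $\log(\sqrt d/\eta)$ is not $O(\log(1/\eta))$ once $d$ is large and $\eta$ is constant. Instead I would run a volume/packing argument on the lattice-point formulation: to each $z \in \mb{Z}^d$ with $\twonorm{z} \le R$ associate the unit-volume cube $z + [-\tfrac12,\tfrac12)^d$. These cubes are pairwise disjoint and all contained in the ball of radius $R + \sqrt d/2 \le 2\sqrt d/\eta$ (using $\eta < 1/2$), so their number is at most $\vol(B(0, 2\sqrt d/\eta)) = \frac{\pi^{d/2}}{\Gamma(d/2+1)}\,(2\sqrt d/\eta)^d$. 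Applying the standard bound $\Gamma(d/2+1) \ge (d/(2e))^{d/2}$ collapses the $d^{\pm d/2}$ factors and yields $|B_\eta| \le (8\pi e)^{d/2}\,\eta^{-d} = \exp\!\big(O(d) + d\log(1/\eta)\big)$; since $\eta < 1/2$ forces $\log(1/\eta) \ge \log 2$, the $O(d)$ term is absorbed and $|B_\eta| \le \exp(O(d\log(1/\eta)))$.

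Combining the two bounds gives $|B_\eta| = \exp(\Theta(d\log(1/\eta)))$. The only real subtlety is the upper bound: one must avoid the coordinate-wise count and instead view $B_\eta$ as the set of lattice points in a ball, so that the extraneous $\sqrt d$ factor (equivalently, the $d^{-d/2}$ decay of the unit-ball volume) cancels against the $\Gamma$-function; everything else is routine bookkeeping using the hypothesis $\eta < 1/2$.
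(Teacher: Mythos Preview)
Your proof is correct. The lower bound via the product set $\{-k,\dots,k\}^d$ is essentially the paper's argument (the paper uses $\{0,1,\dots,\lfloor 1/\eta\rfloor\}^d$, yielding $\lfloor 1/\eta\rfloor^d$ instead of $(2k+1)^d$, but the idea is identical). For the upper bound the paper simply cites \cite{AlonK17}, whereas you give a self-contained packing argument: rescale to integer lattice points in a ball of radius $\sqrt d/\eta$, pack disjoint unit cubes, and use Stirling on $\Gamma(d/2+1)$ so that the $d^{d/2}$ from the radius cancels the $d^{-d/2}$ from the unit-ball volume. Your observation that the naive coordinate-wise count $(2\sqrt d/\eta+1)^d$ is too weak is exactly the point; the volume argument is the standard way around it, and your execution is clean.
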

\begin{proof}
A proof of the upper bound appears in \cite{AlonK17}. For the lower bound, observe that if $x_i = c_i \eta/\sqrt d$ for $c_i\in\{0,1,\ldots,\lfloor 1/\eta\rfloor\}$, then $\|x\|_2\le 1$ so that $x\in B_\eta$. Thus $|B_\eta| \ge \lfloor 1/\eta\rfloor^d$.
\end{proof}

We now prove the space lower bound using a standard encoding-type argument.

\begin{theorem}
Fix $\eta\in(0,1/2)$. Then any data structure for ADE in Euclidean space which always halts within some finite time bound $T$ when answering a query, with failure probability $\delta<1/2$ and $\eps\in(0,1)$, requires $\Omega(nd\log(1/\eta))$ bits of memory. This lower bound holds even if promised that all database and query vectors are elements of $B_\eta$.
\end{theorem}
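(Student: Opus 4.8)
The plan is to use a standard encoding argument: a data structure using $S$ bits of memory can be used to compress an arbitrary element of some large set into roughly $S$ bits, so $S$ must be at least the logarithm of that set's size. Here the "large set" will be (essentially) the set of all $n$-tuples of points from $B_\eta$, which by \cref{lem:net-bound} has size $\exp(\Theta(nd\log(1/\eta)))$.

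\medskip

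First I would fix the approximation parameter, say $\eps = 1/2$ (any constant in $(0,1)$ works), and observe the key geometric separation fact: if $x, x' \in B_\eta$ are \emph{distinct}, then $\|x - x'\|_2 \ge \eta/\sqrt d$ since they differ in at least one coordinate by at least $\eta/\sqrt d$; moreover $\|x - x'\|_2 \le 2$. The idea is that querying the data structure at a point $q$ reveals, up to a $(1\pm\eps)$ factor, the distance $\|q - x_i\|$ for every $i$, and in particular whether $q = x_i$ (distance $0$, hence reported estimate $0$) or $q \ne x_i$ (distance $\ge \eta/\sqrt d > 0$, hence reported estimate $\ge (1-\eps)\eta/\sqrt d > 0$). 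So by querying at every point of $B_\eta$ one can recover the set $\{x_1,\dots,x_n\}$ exactly. To recover the tuple \emph{with multiplicity/order}, it is cleanest to instead encode an injective map: given an arbitrary sequence $(x_1,\dots,x_n) \in B_\eta^n$, Alice builds the data structure $\mathcal D$ on this database and sends its memory contents to Bob; Bob, for each $i \in [n]$, queries $\mathcal D$ at each candidate point $b \in B_\eta$ and identifies $x_i$ as the unique $b$ for which the $i$-th reported estimate is $0$ (or, to be safe against the estimator never outputting exactly $0$, the unique $b$ for which the $i$-th estimate is $< (1-\eps)\eta/\sqrt d$). This reconstructs $(x_1,\dots,x_n)$ exactly.

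\medskip

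The subtlety — and the main obstacle — is that the data structure is \emph{randomized} with failure probability $\delta < 1/2$, and the queries Bob issues are \emph{adaptive} (each candidate check could in principle depend on earlier answers, though here they need not be). This is precisely where the adaptive guarantee is essential and where a nonadaptive Monte Carlo structure would fail the argument: Bob wants all $|B_\eta| \cdot n$ of his queries to be answered correctly simultaneously, but a union bound over exponentially many queries against a per-query failure probability $\delta$ is useless. To handle this, I would fix the randomness: since the data structure halts within finite time $T$ and has failure probability $<1/2$ \emph{against any adaptive query sequence}, there must exist a specific setting $s^*$ of its random seed for which it answers Bob's particular (deterministic, once $s^*$ is fixed) sequence of queries correctly — indeed, by averaging, the probability over $s$ that the structure errs on \emph{some} query in Bob's sequence is at most\ldots hmm, this still needs care. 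The right phrasing: for a \emph{fixed} database, consider the deterministic adaptive querier $\mathcal A$ that runs Bob's reconstruction procedure; the adaptive correctness guarantee says that for this (or any) adaptive adversary, each query in the interaction is answered correctly with probability $\ge 1-\delta > 1/2$ over the seed, but in fact what we need is a single seed that works for the whole interaction. Since $\delta < 1/2$ is a \emph{per-query} bound this is not immediate for long interactions; the clean fix is to note the lower bound only needs to hold for data structures that succeed \emph{with high probability on the full query sequence}, or equivalently to first boost: run $\Theta(\log(n|B_\eta|))$ independent copies and take medians, increasing space by only a $\mathrm{polylog}$ factor and making the all-queries-correct event happen with probability $\ge 1/2$ over the seed; then some fixed seed $s^*$ works, Alice can hard-code it, and the encoding goes through. (Alternatively, and more in the spirit of the paper's framing, one argues directly that the model's guarantee is that the interaction transcript is correct w.h.p.)

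\medskip

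Finally I would put the pieces together. Fixing such a good seed $s^*$, the memory contents of $\mathcal D$ form a string of $S$ bits from which $(x_1,\dots,x_n) \in B_\eta^n$ is uniquely recoverable; hence the map from databases to memory strings is injective on $B_\eta^n$, so $2^S \ge |B_\eta|^n = \exp(\Theta(nd\log(1/\eta)))$, giving $S = \Omega(nd\log(1/\eta))$. The $\mathrm{polylog}$ blow-up from boosting is absorbed since $\log(n|B_\eta|) = O(\log n + d\log(1/\eta))$ and dividing through still leaves $\Omega(nd\log(1/\eta))$ for, say, $n$ at least a large constant; I would remark that for the natural regime $\eta \le 1/\mathrm{poly}(d)$ even the crude version suffices. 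I expect the write-up to be short, with the only genuinely delicate point being the seed-fixing step for adaptive randomized structures described above.
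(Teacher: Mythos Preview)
Your overall encoding strategy is right, and the geometric observation that a $(1+\eps)$-approximation to $\|q-x_i\|$ distinguishes $q=x_i$ from $q\neq x_i$ is exactly what is needed. The genuine gap is in the step where you handle the randomness by boosting.

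Your boosting factor is $k=\Theta(\log(n|B_\eta|))=\Theta(\log n + d\log(1/\eta))$. The injectivity argument then yields $kS \ge n\log_2|B_\eta| = \Theta(nd\log(1/\eta))$, so after dividing you obtain only
\[
S \;\ge\; \Omega\!\left(\frac{nd\log(1/\eta)}{\log n + d\log(1/\eta)}\right),
\]
which in the typical regime $d\log(1/\eta)\ge\log n$ collapses to $\Omega(n)$. Your claim that ``dividing through still leaves $\Omega(nd\log(1/\eta))$'' is incorrect arithmetic: the boosting cost is not polylogarithmic in the relevant parameters, it is $\Theta(d\log(1/\eta))$, which is precisely the factor you are trying to establish in the lower bound. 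The remark about $\eta\le 1/\mathrm{poly}(d)$ does not rescue this, for the same reason.

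The paper's proof avoids this loss by exploiting the one hypothesis you never used: the query procedure halts within time $T$, hence reads at most $T$ random bits. The decoder (which need not be efficient) therefore enumerates all $\le 2^T$ possible query random strings, runs the query on each, and takes the majority answer. Since the per-query failure probability is $<1/2$, the majority is correct for every pair $(q,i)$. This makes decoding succeed deterministically \emph{without modifying the data structure at all}, so there is no space blow-up and the bound $S\ge n\log_2|B_\eta|=\Omega(nd\log(1/\eta))$ follows directly. As a side remark, the decoder's queries here are entirely non-adaptive (just enumerate $B_\eta$), so the adaptivity discussion in your middle paragraph was not actually the obstacle.
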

\begin{proof}
  Let $\mathcal D$ be such a data structure using $S$ bits of memory.  We will show that the mere existence of $\mathcal D$ implies the existence of a randomized encoding/decoding scheme where the encoder and decoder share a common public random string, with $\textsf{Enc}:B_\eta^n \rightarrow \{0,1\}^S$. The decoder will succeed with probability $1$. Thus encoding length $s$ needs to be at least the entropy of the input distribution, which will be the uniform distribution over $B_\eta^n$, and thus $S \ge \lceil n\log_2 |B_\eta|\rceil$, which is at least $\Omega(nd\log(1/\eta))$ by \cref{lem:net-bound}.

  We now define the encoding: we map $X = (x_i)_{i=1}^n\in B_\eta^n$ to the memory state of the data structure after pre-processing with database $X$ (this memory state is random since the pre-processing procedure may be randomiezd). The encoding length is thus $S$ bits. We now give an exponential-time decoding algorithm which can recover $X$ precisely given only $\textsf{Enc}(X)$. To decode, we iterate over all $q\in B_\eta$ to discover which $x_i$ equal $q$ (if any). Note $\|q-x_i\|_2 = 0$ iff $q=x_i$, and thus a multiplicative $1+\eps$-approximation to all distances would reveal which $x_i$ are equal to $q$. To circumvent the nonzero failure probability of querying the data structure, we simply iterate over all possibilities for the random string used by the data structure (since $\mc{D}$ runs in time at most $T$ it always flips at most $T$ coins, and there are at most $2^T$ possibilities to check). Since the failure probability is at most $1/2$, the estimate of $q$ to $x_i$ will be zero more than half the time iff $q=x_i$.
  \end{proof}

\end{document}